\documentclass[12pt,draftclsnofoot,onecolumn]{IEEEtran}
\usepackage{amssymb}
\usepackage{amsmath}
\UseRawInputEncoding
\usepackage{graphicx}
\usepackage{cite}
\usepackage{color}
\usepackage{citesort}
\usepackage{multicol}
\usepackage{amsfonts}
\usepackage{bbm}
\usepackage{subfigure}
\usepackage{cite,amsmath,amssymb}
\usepackage{comment}
\usepackage{amsmath}
\usepackage{lipsum}
\usepackage{stfloats}
\usepackage{algorithm}
\usepackage{algorithmic}
\usepackage{makecell}
\usepackage{booktabs}
\usepackage{multirow}

\newlength{\halfpagewidth}
\setlength{\halfpagewidth}{\linewidth}
\divide\halfpagewidth by 2

\newtheorem{theorem}{\textbf{Theorem}}

\newtheorem{lemma}{\textbf{Lemma}}

\newtheorem{corollary}{\textbf{Corollary}}

\newtheorem{proof}{\textbf{Proof}}

\makeatletter
\def\ScaleIfNeeded{%
\ifdim\Gin@nat@width>\linewidth \linewidth \else \Gin@nat@width
\fi } \makeatother

\begin{document}
%

\title{\huge{V2I-Based Platooning Design with Delay Awareness}}
\author{Lifeng Wang, Yu Duan, Yun Lai, Shizhuo Mu, and Xiang Li
\thanks{Authors are with the Department of Electrical Engineering, Fudan University, Shanghai, China (E-mail: $\rm\{lifeng.wang,lix\}@fudan.edu.cn$).}
}

\maketitle

\begin{abstract}
This paper studies the vehicle platooning system based on vehicle-to-infrastructure (V2I) communication, where all the vehicles in the platoon upload their driving state information to the roadside unit (RSU), and RSU makes the platoon control decisions with the assistance of edge computing. By addressing the delay concern, a platoon control approach is proposed to achieve plant stability and string stability. The effects of the time headway, communication and edge computing delays on the stability are quantified. The velocity and size of the stable platoon are calculated, which show the impacts of the radio parameters such as massive MIMO antennas and frequency band on the platoon configuration. The handover performance between RSUs in the V2I-based platooning system is quantified by considering the effects of the RSU's coverage and platoon size, which demonstrates that the velocity of a stable platoon should be appropriately chosen, in order to meet the V2I's Quality-of-Service and handover constraints.
\end{abstract}

\begin{IEEEkeywords}
Vehicle platooning, V2I, edge computing, massive MIMO.
\end{IEEEkeywords}

\section{Introduction}
The commercially-used adaptive cruise control (ACC) enables vehicles to maintain safe inter-vehicle distance, which can avoid the collision and achieve autonomous driving through following the vehicle ahead~\cite{C_Y_Liang_1999}. To obtain the inter-vehicle distance and relative velocity, such an intelligent transportation system (ITS) fully depends on the vehicle's radar sensing capability~\cite{Ramzi2003}. However, the drawback of radar sensor is that its efficacy could be degraded by the obstructions or bad weather. More importantly, ACC system is
susceptible to the string instability, which results in phantom traffic jams~\cite{Daniel_Work2020}. Cooperative adaptive cruise control (CACC) is a promising approach to deal with these issues~\cite{X_Liu_2001,Soncu_2014}. As the extension of ACC, CACC allows vehicles to communicate with each other for sharing their driving state information (DSI) such as position, spacing, velocity, acceleration/deceleration rate, and time headway etc. Compared to the ACC, CACC can provide earlier collision avoidance, traffic jam mitigation, aerodynamic drag force reduction, and extended sensors~\cite{G_Naus_2010,J_Ploeg_2011,Vicente2014,3GPP_TS_V2X}.

Vehicle-to-everything (V2X) communications allow vehicle-to-vehicle (V2V) or vehicle-to-infrastructure (V2I) connectivity in the CACC systems. The V2V-based CACC systems have been widely studied in the literature~\cite{Shaw2007,YuYu_Lin_2017,B_Liu_2017,Shengbo_ITSC_2018,S_Darbha_2019,Yuanheng_Zhu_2019,Tengchan_2019,Michal_Sybis_2019}. These works have shown that V2V communications improve the stability and reduce the time headway in ITS systems, which means that higher traffic throughput and fuel efficiency can be achieved. However, the connectivity configurations in CACC systems are various (See Fig. 10 in \cite{Linjun_2016} and Fig. 2 in \cite{Shengbo_2019_mag}), which may result in high complexity of the control design. Existing research contributions have pointed out that connections between the leader vehicle and following vehicles may be more critical in the platooning system~\cite{B_Liu_2017}, which is the typical CACC scenario. Moreover, more V2V connections in the CACC systems may not necessarily improve the robustness if the control gains are inappropriately selected~\cite{Linjun_2016}. The V2V transmission rate needs to be large enough, in order to mitigate the detrimental effects of communication delay on the system stability~\cite{X_Liu_2001,Mario_2015,HaitaoXing_2020}. In addition, interference in the V2V-based CACC systems could deteriorate the V2V's Quality-of-Service (QoS) and should be properly managed~\cite{YuYu_Lin_2017,Tengchan_2019}, however, such interference management problem is challenging in practical dense traffic scenario~\cite{A_He_2017}. The V2I-based ITS systems have also attracted much attention~\cite{Vicente_2012,LeiChen_2016,yuyu_lin_ITA,Montanaro_2018,Vilalta_2019,Chang_2020}. It is known that V2I provides high-reliable and low-latency communication compared to the V2V, and the edge and central cloud computing resources~\cite{xiaoyanhu_2020} can be utilized in the V2I-based ITS systems. Therefore, V2I can ensure that the traffic flow is managed more efficiently and message dissemination is cost-effective~\cite{Vicente_2012,Montanaro_2018,Chang_2020}, particularly in dense traffic scenario with multi-platoons~\cite{yuyu_lin_ITA}.

In the CACC systems, vehicle platooning enables following vehicles to autonomously reach the leader vehicle's moving speed and keep the desired inter-vehicle distance while guaranteeing the safety and stability. Such maneuver control functionality can improve the road throughput and disengage the following vehicles from driving tasks. The aforementioned works mainly focus on the V2V-based  platooning systems. Due to its distributed feature, following vehicles undergo different levels of communication delays and different numbers of V2V links in the V2V-based platooning systems, which makes the platooning design challenging~\cite{X_Liu_2001,Mario_2015,LX_2011}. The V2V-based platooning also has to bear the extra burden of the heterogeneous control mechanisms and hardware resulted from different types of vehicles. To address these issues, this paper proposes an V2I-based platooning design. Compared to the conventional vehicle platooning systems with V2V communications, the advantages of the proposed design are: i) The majority of existing vehicle platooning schemes highly depend on the V2V links, which cannot support long-range communications and are subject to the blockages and severe interference in the dense traffic scenarios. Moreover, following vehicles that cannot directly communicate with the leader vehicle or other vehicles have to let other vehicles relay the vehicles' DSI, which makes the reliability compromised and inevitably results in high-latency. The proposed design only requires V2I connections, which are usually line-of-sight (the roadside units (RSUs) could be sites on the lamp posts); ii) By putting the platoon controller at the RSU with edge computing capability, the proposed design disengages following vehicles from making maneuver control decisions and enables simultaneous maneuver among vehicles in a platoon through sending control commands to the vehicles' actuators at the same time, in contrast to the V2V-based designs that different vehicles receive vehicles' DSI and carry out control decisions at the different time; iii) In existing platoon systems, any changes involving targeted inter-vehicle distance and vehicle's velocity have to be known by all the vehicles in a platoon, in order to change their states for new formation. In the proposed design, such changes only need to be known at the RSU, which will update the control commands accordingly. Therefore, the proposed design is more efficient and scalable for platoon management.

The main contributions of this paper are concluded as follows:
\begin{itemize}
\item \textbf{V2I-based Platooning Control Design:} In the considered system, all the vehicles' DSI are uploaded to the RSU via massive multiple-input multiple-output (MIMO), and RSU makes the platooning control decisions including the targeted velocity of the platoon based on the proposed control design. After computing the control inputs of all the following vehicles, RSU sends them to the following vehicles at the same time and frequency band.
\item \textbf{Plant and String Stability for the Proposed Platooning Solution:} In light of the communication and computing delay concern, the feasible control gain regions for meeting the plant stability and string stability are presented, respectively. We show that the control gains of the proposed platooning solution can be easily determined by using the D-subdivision method, in order to achieve plant stability. The effects of time headway on the stability are quantified.
\item \textbf{Relationships between Platoon's Velocity, Radio Parameters and Handover:} To achieve the required QoS of the V2I and avoid frequent handover, the platoon's  velocity needs to be appropriately chosen. With the assistance of massive MIMO, we provide a tractable approach to explicitly quantify the relationships between platoon's velocity, handover and radio parameters including the number of massive MIMO antennas and frequency band.  A simple solution with the help of dual connectivity has been proposed to achieve the seamless platooning control when handover occurs. The results are useful guidelines for fast radio resource allocation and handover management.
\item \textbf{Design Insights:} Our results show that different control gains have a big impact on the time of reaching the system stability. Different external disturbances and delays give rise to dramatic variations in the vehicles' traveling speeds and spacing errors, but have negligible effect on the disturbance time period before reaching the system stability. The effect of platoon size on the platooning stability and efficiency is marginal, which confirms the scalability of the proposed design.
\end{itemize}

The rest of this paper is organized as follows. In Section~\ref{System_description}, the considered system model is described and
the platooning control design is proposed. The stability of the proposed control design is analyzed in Section~\ref{stability_section}.
The platoon's velocity and handover are determined in Section \ref{sec:velocity_handover}.
Section~\ref{sec:simulation} provides the simulation results. Finally, some concluding remarks are presented in Section~\ref{conclusion_section}.

\section{System Descriptions}\label{System_description}
\begin{figure}[t!]
\centering
\includegraphics[width=3.3 in]{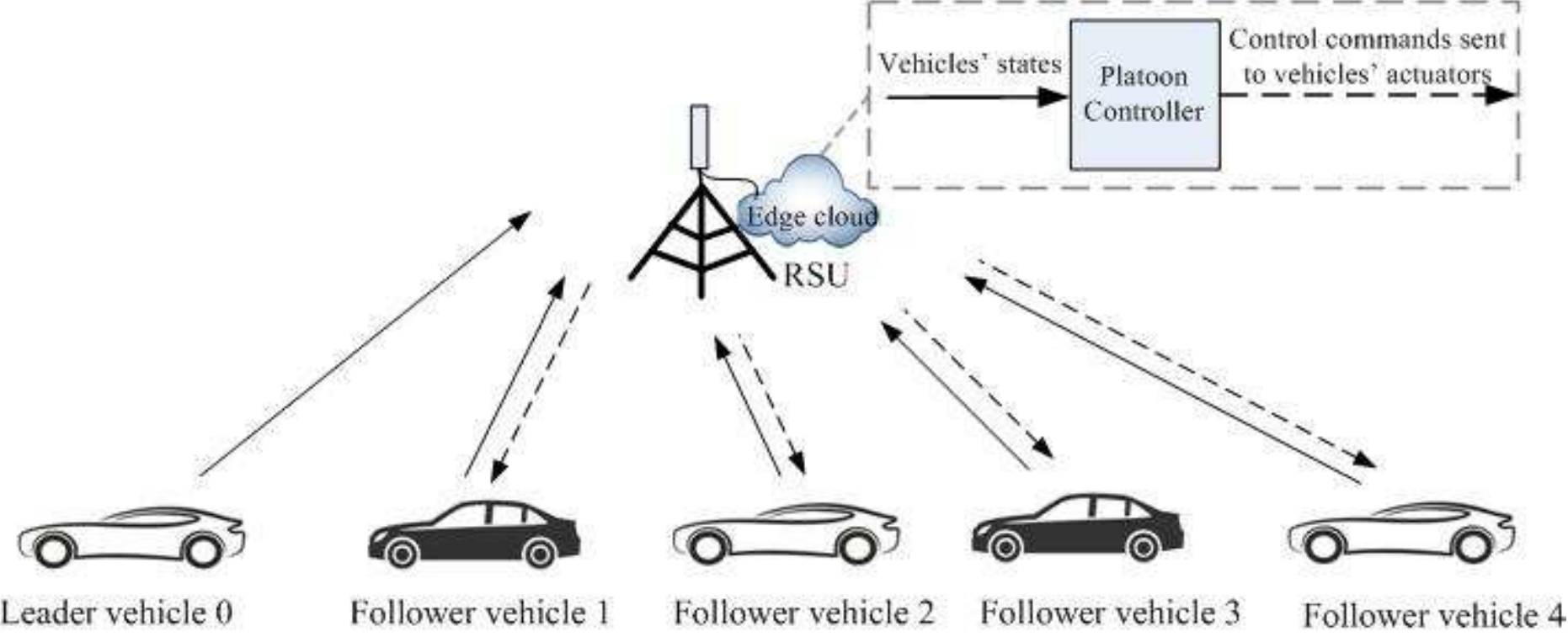}
\caption{An illustration of V2I-based platooning system with edge computing.
}
\label{edge_platoon}
\end{figure}
As illustrated in Fig.~\ref{edge_platoon}, we consider an V2I-based platooning system with massive MIMO, where each RSU equipped with $N$ antennas has edge computing capability~\cite{xiaoyanhu_2020}, and there are $M+1$ single-antenna vehicles in a platoon with the leader vehicle $0$ and follower vehicle $i$ ($i=1,\cdots,M$). In such a system, each vehicle simultaneously sends its DSI involving position and moving speed to the RSU\footnote{Note that vehicles' positions could be evaluated at RSU by applying positioning techniques~\cite{del_Peral-Rosado,Wymeersch_2017}, in this case, delay will be further cut because of less DSI uploaded to the RSU.}, which shall be processed by RSU for determining platooning control decisions. After edge cloud processing, RSU sends the control commands (i.e., desired acceleration values) to the corresponding follower vehicles' actuators. A point-mass model is considered to describe the longitudinal vehicle dynamics, which is given by~\cite{Linjun_2016,Mario_2015,Jianglin2020}
\begin{align}\label{dyna_model}
\dot{x}_i(t)=v_i(t),\;\; \dot{v}_i(t)=u_i(t),
\end{align}
where ${x}_i(t)$, $v_i(t)$, and $u_i(t)$ are the position, velocity, and control input (acceleration) of the vehicle $i$ at time $t$, respectively. The spacing error is defined as
\begin{align}\label{spacing_error}
e_i(t)=x_i \left( t \right) - x_{i - 1} \left( t \right) + h v_o+ l,
\end{align}
where $h$ is the time headway, $v_o$ is the targeted platoon's velocity (The selection of $v_o$ value will be illustrated in Section \ref{sec:velocity_handover}), and $l$ is standstill distance, $h v_o+ l$ is the desired inter-vehicle distance. As the leader vehicle travels at the constant speed of $v_o$, the platooning rule is
\begin{align}\label{platoon_rule}
\mathop {\lim }\limits_{t \to \infty } e_i \left( t \right) = 0,\;\;\mathop {\lim }\limits_{t \to \infty } v_i \left( t \right) = v_o.
\end{align}

Since all the vehicles undergo the identical communication delay and the processing delay with the assistance of massive MIMO and edge computing, the platooning control law at the RSU is designed as
\begin{align}\label{control_law}
u_i(t)&=  - K_x \left(x_i \left( t - \tau  \right) - x_{i - 1} \left( t - \tau  \right) + hv_i \left(t - \tau \right)+l\right)\nonumber\\
 &- K_v \left( v_i \left(t - \tau\right) - v_{i-1} \left(t - \tau\right)  \right) - K_{v_o } \left(v_i \left(t - \tau \right) - v_o  \right) \nonumber\\
&~- K_{x_o } \left(x_i \left(t - \tau \right) - x_o \left( t - \tau \right) + ihv_o + i l\right),
\end{align}
where $K_x$, $K_v$, $K_{v_o }$, and $K_{x_o}$ are positive control gains, $\tau$ is the total amount of the delay resulted from the communication and edge cloud processing.  To guarantee the platoon stability, the control gains need to be chosen appropriately.

\noindent \textbf{Remark 1:} The proposed control law only utilizes the DSI of the leader vehicle and the follower vehicle $i$ for determining the vehicle $i$'s control input. Although existing V2V-based platooning control designs~\cite{X_Liu_2001,Shaw2007,S_Darbha_2019,Tengchan_2019} have attempted to make the most of these DSI, the effects of time headway~\cite{X_Liu_2001} or communication delay~\cite{Shaw2007,S_Darbha_2019} may be ignored for tractability, or some quite conservative conditions are required~\cite{Tengchan_2019}. Another benefit of the proposed V2I-based platooning design is that the control gains for system stability can be easily calculated, which is illustrated in the next section.

\section{Stability Analysis}\label{stability_section}
In this section, the control gains in \eqref{control_law} are determined from the perspective of plant stability and string stability. To facilitate the stability analysis, a frequency-domain approach is adopted. According to \eqref{dyna_model},  we have
\begin{align}\label{error_control_input}
\ddot{x}_i \left( t \right)-\ddot{x}_{i-1} \left( t \right) = u_i(t)-u_{i-1}(t).
\end{align}
Substituting \eqref{control_law} into \eqref{error_control_input}, after mathematical manipulations, \eqref{error_control_input} is rewritten as
\begin{align}\label{error_control_Relationship}
&\ddot{x}_i \left( t \right)- \ddot{x}_{i-1} \left( t \right) = - \lambda \left(x_i \left( {t - \tau } \right) - x_{i - 1} \left( {t - \tau } \right)\right)  \nonumber\\
 &~~+K_x \left(x_{i - 1} \left(t - \tau \right) - x_{i - 2} \left(t - \tau\right)\right) \nonumber\\
&~~-\eta \left(v_{i} \left(t - \tau\right) - v_{i-1} \left(t - \tau\right)\right)   \nonumber\\
  & ~~ + K_v \left(v_{i-1} \left(t - \tau\right) - v_{i-2} \left(t - \tau\right)\right)-K_{x_o} \left(h v_o+ l\right),
\end{align}
where $\lambda = K_x+K_{x_o }$ and $\eta=K_x h+  K_v + K_{v_o }$. Let $E_i\left(s\right)=\mathcal{L}\left\{e_i \left(t\right)\right\}$ denote the Laplace transform of the spacing error $e_i \left(t\right)$, taking the Laplace transform of \eqref{spacing_error} yields
\begin{align}\label{error_laplace}
\mathcal{L}\left\{x_i \left(t - \tau \right)-x_{i-1} \left(t - \tau \right)\right\} = e^{-\tau s} E_i\left(s\right) - e^{-\tau s} \frac{h v_o+l}{s}.
\end{align}
Based on \eqref{error_laplace},  the Laplace transform of \eqref{error_control_Relationship} is given by
\begin{align}\label{error_transfer_eq}
E_i\left(s\right) &= \frac{\left(K_v s + K_x\right)e^{-\tau s}}{\Theta\left(s\right)} E_{i-1}\left(s\right) \nonumber\\
&~~+ \frac{s+\left(\eta-K_v\right) e^{-\tau s} + \left(e^{-\tau s}-1\right)\frac{K_{x_o}}{s}}{\Theta\left(s\right)} \left(h v_o+ l\right),
\end{align}
where $\Theta\left(s\right)=s^2+\eta s e^{-\tau s} +\lambda e^{-\tau s}$ is referred to as characteristic function. Therefore, in the proposed platooning design,  the spacing error transfer function is calculated as
\begin{align}\label{error_transfer_eq}
\mathcal{H}_i\left(s\right) = \frac{\left(K_v s + K_x\right)e^{-\tau s}}{\Theta\left(s\right)}.
\end{align}

\subsection{Plant Stability}
Plant stability is achieved when the platooning rule given by \eqref{platoon_rule} is met. As such, the necessary and sufficient condition for satisfying the plant stability is
\begin{align}\label{condition_plant}
\mathrm{Re}\left(s_0\right) < 0,~\forall~\Theta\left(s_0\right) = 0,
\end{align}
which means that for an arbitrary characteristic root of $\Theta\left(s\right)$, it has negative real part. The complexity of solving \eqref{condition_plant} depends on the specific spacing error transfer function, which is determined by the platooning control law. The use of the Routh-Hurwitz criterion with Pad\'{e} approximation requires that the spacing error transfer function for the frequency range of interest can be well approximated~\cite{H_Xing_2016,Tengchan_2019,HaitaoXing_2020}, which may bring in more complexity. Considering the proposed platooning law given by \eqref{control_law}, we show that the control gains for achieving plant stability can be easily and precisely obtained by leveraging the D-subdivision method~\cite{D_subdivision_method}. Based on \eqref{condition_plant}, we have the following theorem:
\begin{theorem}
Plant stability can be guaranteed if and only if $\left(\lambda, \eta\right)$ belongs to the feasible region:
\begin{align}\label{w_eta_gamma_plane}
\mathcal{G}\left(\tau\right)=&\Bigg\{\left(\lambda, \eta \right): \lambda \leq  w^2 \cos \left( {\tau w} \right), \nonumber\\
&~~~~~~~~~~~~~~~~\eta \leq w\sin \left( {\tau w} \right), w \in\left(0, \frac{\pi}{2\tau}\right) \Bigg\}.
\end{align}
\end{theorem}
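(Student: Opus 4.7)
The plan is to use the D-subdivision method, which finds the stability boundary by locating those parameter values at which the characteristic function $\Theta(s)=s^{2}+\eta s e^{-\tau s}+\lambda e^{-\tau s}$ has a root on the imaginary axis, and then identifies on which side of that boundary all roots lie in the open left half plane.

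First I would substitute $s=jw$ with $w>0$ into $\Theta(s)=0$, expand $e^{-j\tau w}=\cos(\tau w)-j\sin(\tau w)$, and split into real and imaginary parts to obtain
\begin{align}
-w^{2}+\lambda\cos(\tau w)+\eta w\sin(\tau w)&=0,\\
\eta w\cos(\tau w)-\lambda\sin(\tau w)&=0.
\end{align}
Treating $w$ as a parameter and solving this $2\times 2$ linear system for $(\lambda,\eta)$ (using $\cos^{2}(\tau w)+\sin^{2}(\tau w)=1$) yields the parametric boundary curve $\lambda=w^{2}\cos(\tau w)$, $\eta=w\sin(\tau w)$. Because the control gains $K_{x},K_{v},K_{v_{o}},K_{x_{o}}$ are positive, the physically relevant branch has $\lambda>0$ and $\eta>0$, which restricts the parameter to $w\in(0,\pi/(2\tau))$, the range on which both $\cos(\tau w)$ and $\sin(\tau w)$ are positive. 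This gives the equality boundary of $\mathcal{G}(\tau)$.

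Next I would establish the direction of crossing to justify turning the equalities into the inequalities $\lambda\le w^{2}\cos(\tau w)$ and $\eta\le w\sin(\tau w)$. The standard way is a test-point/continuity argument: for $(\lambda,\eta)=(0,0)$ the characteristic function degenerates to $s^{2}=0$, which is marginal, but for any sufficiently small positive $(\lambda,\eta)$ below the boundary one can check directly (e.g., by a perturbation expansion around $s=0$ or by a Nyquist/argument-principle count) that both roots of $\Theta(s)$ have negative real part. Since $\Theta$ depends continuously on $(\lambda,\eta)$, the number of right half plane roots can change only when a root crosses the imaginary axis, i.e., only when $(\lambda,\eta)$ crosses the curve derived above. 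Hence every $(\lambda,\eta)$ in the connected component containing the test point and bounded by this curve and the positive axes is plant-stable, and crossing the curve introduces a pair of imaginary roots that subsequently move into the right half plane, losing stability.

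The main obstacle I expect is the sufficiency part, namely ruling out \emph{other} imaginary-axis crossings that could invalidate the claim that the single branch $w\in(0,\pi/(2\tau))$ bounds the whole stable region. Concretely, I must verify that no root crosses for $w\ge \pi/(2\tau)$ while $(\lambda,\eta)$ remains in the proposed region, and also handle the $w\to 0^{+}$ limit where the boundary passes through the origin tangent to the $\lambda$ axis. I would dispatch this by noting that for $w\ge\pi/(2\tau)$ at least one of $w^{2}\cos(\tau w)$, $w\sin(\tau w)$ becomes non-positive, so the corresponding $(\lambda,\eta)$ lies outside the positive orthant and thus outside $\mathcal{G}(\tau)$, and by a small-$w$ expansion showing that the $w\to 0^{+}$ branch of the curve is approached from within the stable region. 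Combining the boundary derivation, the test-point argument, and this exclusion of spurious crossings yields both necessity and sufficiency of $(\lambda,\eta)\in\mathcal{G}(\tau)$.
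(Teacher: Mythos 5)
Your overall route is the same as the paper's: both apply the D-subdivision method, substitute $s=jw$ into $\Theta(s)=0$, split into real and imaginary parts, and solve the resulting $2\times 2$ system to get the parametric boundary $\lambda=w^{2}\cos(\tau w)$, $\eta=w\sin(\tau w)$. Your derivation of that boundary is correct. The difference is in how the two arguments finish: you propose a test-point-plus-continuity argument, while the paper computes the crossing direction explicitly, differentiating the real and imaginary parts of $\Theta(\xi+jw)=0$ with respect to $\eta$ and $\lambda$ along the D-curves to show $\tfrac{d\xi}{d\lambda}>0$ always and $\tfrac{d\xi}{d\eta}>0$ once $\eta>\tfrac{2\cos(\tau w)}{\tau}$, so that stability is lost monotonically as either gain aggregate increases.

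There is a genuine gap in the step you yourself flagged as the main obstacle. Your claim that for $w\ge\pi/(2\tau)$ ``at least one of $w^{2}\cos(\tau w)$, $w\sin(\tau w)$ becomes non-positive'' is false: on every interval $\tau w\in\left(2k\pi,\,2k\pi+\tfrac{\pi}{2}\right)$ with $k\ge 1$, both $\cos(\tau w)$ and $\sin(\tau w)$ are positive, so there are infinitely many further branches of the D-curve lying in the positive orthant of the $(\lambda,\eta)$ plane. (The paper explicitly notes that positivity of $\lambda$ and $\eta$ only restricts $\tau w$ to the union of these intervals, not to $(0,\pi/2)$.) Your argument therefore does not rule out an imaginary-axis crossing at some $w>\pi/(2\tau)$ while $(\lambda,\eta)$ stays inside the claimed region, which is exactly what sufficiency requires. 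The conclusion is salvageable --- on the $k\ge 1$ branches one can check that whenever $\eta=w\sin(\tau w)<\tfrac{\pi}{2\tau}$ one has $\cos(\tau w)$ close to $1$ and hence $\lambda\approx(2k\pi/\tau)^{2}$, far above any $\lambda$ admitted by the $k=0$ branch, so these branches do not meet $\mathcal{G}(\tau)$ --- but that estimate (or the paper's monotone crossing-direction argument, which makes the higher branches irrelevant because stability, once lost at the first crossing, cannot be regained as $\lambda$ or $\eta$ increases) must actually be supplied. The same oversight also appears earlier in your write-up, where you assert that $\lambda,\eta>0$ by itself restricts the parameter to $w\in(0,\pi/(2\tau))$.
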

\begin{proof}
See Appendix A.
\end{proof}

\noindent \textbf{Remark 2:} As shown in Fig. \ref{fig_lamba_eta}, the size of the feasible region $\mathcal{G}\left(\tau\right)$ decreases as delay increases. Based on \textbf{Theorem 1}, we see that $\eta < \frac{\pi}{2\tau}$. Therefore, for a specific $\eta^* \in\left(0, \frac{\pi}{2\tau}\right)$, the critical value $w^*$ for $\eta^* = w^*\sin \left(\tau w^* \right)$ can be efficiently calculated by using one-dimension search since $w\sin \left(\tau w \right)$ is the increasing function of $w \in\left(0, \frac{\pi}{2\tau}\right) $. Then, we can obtain the corresponding $\lambda^* =  (w^*)^2 \cos \left( {\tau w^*} \right)$. In light of the point $\left(\lambda^*, \eta^* \right)$ on the D-curve (See Appendix A), the plant stability requires $\lambda \in \left(0,\lambda^*\right)$  for a specific $\eta^* \in\left(0, \frac{\pi}{2\tau}\right)$.
\begin{figure}[htbp]
\centering
\includegraphics[width=3.5 in,height=2.9 in]{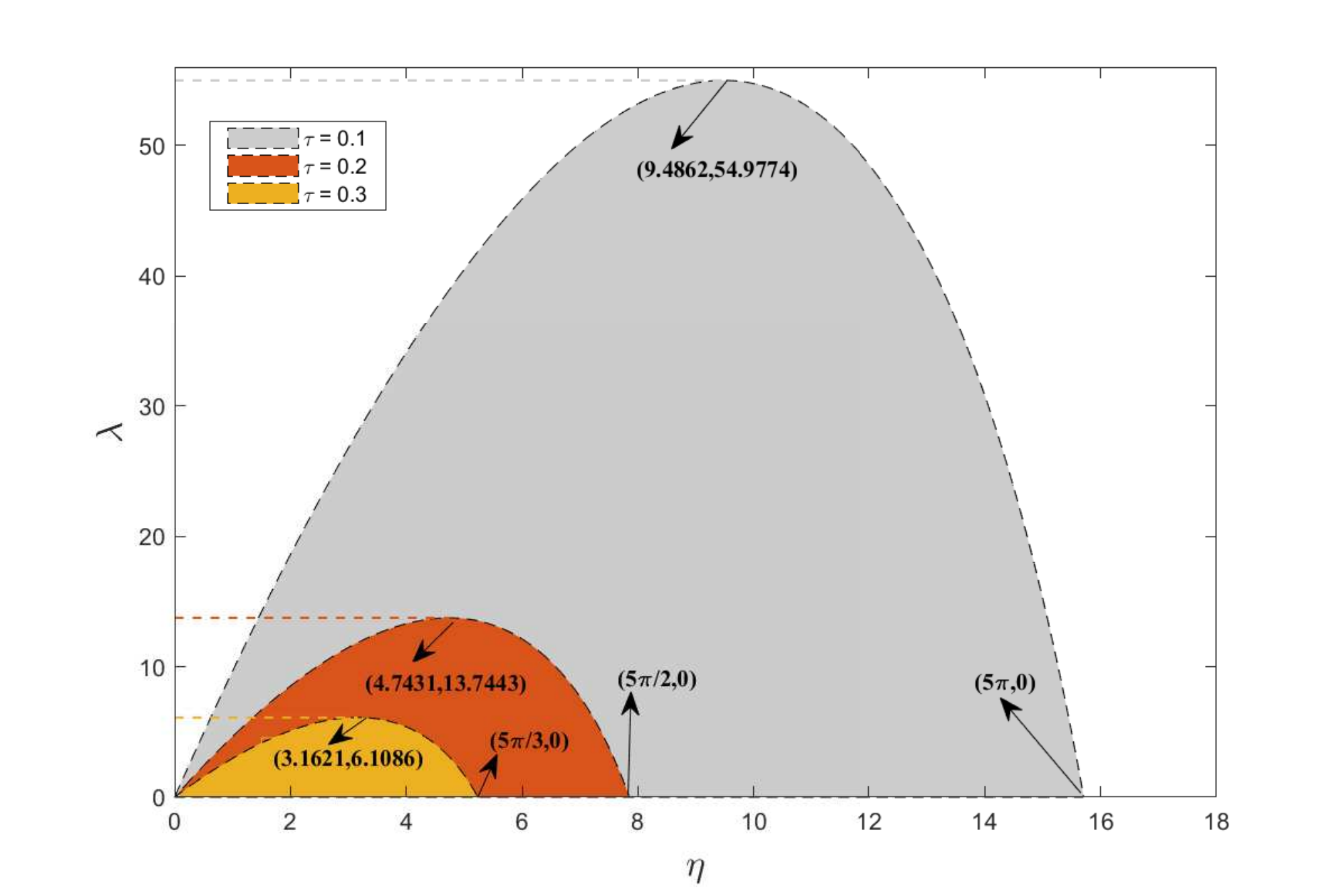}
\caption{The plant stability region $\mathcal{G}\left(\tau\right)$ for different levels of delay with different corner points.}
\label{fig_lamba_eta}
\end{figure}

\subsection{String Stability} In the platooning systems, unstable vehicle strings give rise to phantom traffic jams~\cite{Daniel_Work2020}. String stability ensures that the spacing error is not amplified in the traffic flow upstream~\cite{S_Darbha_2019,HaitaoXing_2020}, namely the magnitude of the spacing error transfer function $\mathcal{H}_i\left(s\right)$ needs to satisfy $\left|\mathcal{H}_i\left(jw\right)\right| < 1$. As such, we have the following theorem:
\begin{theorem}
String stability can be guaranteed when $\left(\lambda, \eta\right)$ belongs to the feasible region:
\begin{align}\label{string_stability}
\mathcal{S}\left(\tau\right) = \Bigg\{\left(\lambda, \eta \right): \lambda \leq  K_v K_{v_o }, \eta \le \frac{1}{2\tau}\Bigg\}.
\end{align}
\end{theorem}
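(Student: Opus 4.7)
The plan is to work with the squared magnitude of the spacing error transfer function and reduce the string stability condition $|\mathcal{H}_i(jw)|<1$ to the positivity of a polynomial in $w$. First I would compute $|\mathcal{H}_i(jw)|^2 = (K_v^2 w^2+K_x^2)/|\Theta(jw)|^2$ and expand $|\Theta(jw)|^2$ by separating the real and imaginary parts of $-w^2+(j\eta w+\lambda)e^{-j\tau w}$, obtaining $|\Theta(jw)|^2 = w^4+\eta^2 w^2+\lambda^2-2\lambda w^2\cos(\tau w)-2\eta w^3\sin(\tau w)$. String stability then reduces to showing $|\Theta(jw)|^2 - K_v^2 w^2 - K_x^2 > 0$ for all $w \ge 0$.

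Next I would bound the two oscillatory terms using the elementary inequalities $\cos(\tau w)\le 1$ and $\sin(\tau w)\le \tau w$ (valid for $\tau w\ge 0$), which yields the polynomial lower bound $|\Theta(jw)|^2 - K_v^2 w^2 - K_x^2 \ge (1-2\eta\tau)\, w^4 + (\eta^2 - 2\lambda - K_v^2)\, w^2 + (\lambda^2 - K_x^2)$. It then suffices to verify that each of the three coefficients is non-negative under the stated assumptions, with at least one of them strictly positive to obtain the strict bound.

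The $w^4$ coefficient is non-negative precisely when $\eta \le 1/(2\tau)$, which is the second condition in $\mathcal{S}(\tau)$. For the $w^2$ coefficient, I would substitute $\eta = K_x h + K_v + K_{v_o}$ and use $K_x,h\ge 0$ to obtain $\eta^2 \ge (K_v+K_{v_o})^2 \ge K_v^2 + 2 K_v K_{v_o}$; combining this with $\lambda \le K_v K_{v_o}$ gives $\eta^2 - 2\lambda - K_v^2 \ge 0$. The constant term is handled through $\lambda = K_x + K_{x_o} \ge K_x$, yielding $\lambda^2 - K_x^2 = K_{x_o}(2K_x + K_{x_o}) > 0$ as long as $K_{x_o}>0$. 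Strict positivity of the constant term upgrades the final inequality to $|\mathcal{H}_i(jw)|<1$ uniformly in $w$.

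The main obstacle lies in the $w^2$ coefficient: the naive bound $\eta \ge K_v$ only gives $\eta^2 \ge K_v^2$, which is insufficient to absorb the $-2\lambda$ term. The key is to retain the $K_{v_o}$ contribution inside $\eta$ so that the cross term $2 K_v K_{v_o}$ emerges from squaring; this is exactly what the theorem's upper bound $\lambda \le K_v K_{v_o}$ is designed to cancel. This observation also explains why the feasibility region is expressed in terms of the product $K_v K_{v_o}$ rather than any individual numerator gain.
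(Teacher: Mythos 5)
Your proposal is correct and follows essentially the same route as the paper's Appendix~B: the paper likewise reduces $\left|\mathcal{H}_i\left(jw\right)\right|<1$ to the positivity of $\Xi\left(w\right)=\left|\Theta\left(jw\right)\right|^2-K_v^2w^2-K_x^2$, applies the same bounds $\sin\left(\tau w\right)\le \tau w$ and $\cos\left(\tau w\right)\le 1$, and checks the resulting polynomial coefficients under $\eta\le \frac{1}{2\tau}$ and $\lambda\le K_vK_{v_o}$. Your observation that the cross term $2K_vK_{v_o}$ arising from $\eta^2$ is what absorbs $-2\lambda$ is exactly the mechanism in the paper's expansion of the $w^2$ coefficient, so no gap remains.
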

\begin{proof}
See Appendix B.
\end{proof}

\noindent\textbf{Remark 3:} From \eqref{string_stability}, we see that the size of the feasible region $\mathcal{S}\left(\tau\right)$ decreases as delay increases. The time headway satisfies $h<\left(\frac{1}{2\tau}-K_v - K_{v_o }\right)/K_x$. Compared to the platooning method of \cite{LX_2011} with ACC where the time headway has to be larger than $2\tau$ for string stability, our design can keep the time headway at a minimum required level by selecting the proper control gains based on \eqref{string_stability}, hence the road throughput can be significantly improved.

\section{Platoon's Velocity and Handover} \label{sec:velocity_handover}
The previous section has provided the stability regions of the proposed platooning design given a targeted velocity of the stable platoon. In practice, the targeted velocity of a stable platoon has to be chosen appropriately, which has a big impact on the inter-vehicle distance, platoon size/length and QoS of the V2X communications. Unfortunately, such concern has not been paid enough attention yet. Existing works such as \cite{YuYu_Lin_2017} have shown that
inappropriate inter-vehicle distance in a platoon could deteriorate the message dissemination in the V2V links. Research efforts have focused on how to obtain the optimal inter-vehicle distance under QoS constraint~\cite{yuyu_lin_ITA}. However, the study of the relationships between platoon's velocity, time headway, handover and radio parameters is still in its infancy. Some critical concerns in the early works such as massive information exchange for centralized formation control~\cite{Shaw2007} can be easily addressed now, since the radio technologies have developed faster than ever before. In this section, we seek a low-complexity approach to answer the following questions:
\begin{itemize}
  \item How to quantify the relationship between the RSU coverage and platoon size/length?
  \item How to allocate the radio resources given a platoon configuration?
  \item How to manage the handover between RSUs given a platoon configuration?
\end{itemize}
 It is de facto challenging to find a generic solution for these questions. As such, we consider the platooning systems with the massive MIMO aided V2I communications. Massive MIMO is one of key 5G radio technologies and enables communications with dozens of users at the same time and frequency band~\cite{Marzetta_2010_Nonc}. Moreover, it can achieve high-speed transmission rate, combat the co-channel interference, and facilitate resource allocation~\cite{Bjorson_mag_2016,xiaoyanhu_2020}.

We adopt a linear massive MIMO processing method for V2I communication, i.e., zero-forcing (ZF) detection is implemented at RSU. The achievable communication rate (bps) of the vehicle $i$ is given by~\cite{ngo2013energy}
\begin{align}\label{capacity_rate}
R_i= B \log _2 \left( {1 + \frac{{P_{v_i} \left( {N - M-1} \right)\beta {d_i}^{ - \alpha} }}{{\sigma ^2 }}} \right),
\end{align}
where $B$ is the platoon system bandwidth, $P_{v_i}$ is the vehicle $i$'s transmit power, $\beta$ is the constant parameter commonly-set as ${(\frac{{\text{c}}}{{4\pi {f_c}}})^2}$ with $c=3 \times 10^8 \rm m/s$ and the carrier frequency $f_c$, $d_i$ is the communication distance, $\alpha$ is the path loss exponent, and $\sigma ^2$ is the noise power. Note that due to the ``channel hardening'' feature of massive MIMO~\cite{Bjorson_mag_2016,xiaoyanhu_2020}, the small-scale fading effects are averaged out. Therefore, given a minimum communication rate threshold $R_\mathrm{th}$ (namely QoS constraint), the radius of the RSU coverage is
\begin{align}\label{RSU_range}
d_{\rm th}=\left( {\frac{{P_{v_i } \left( {N - M - 1} \right)\beta }}{{\sigma ^2 \left( {2^{\frac{{R_{\rm th} }}{B}}  - 1} \right)}}} \right)^{1/\alpha }.
\end{align}
Let $r_o$ and $h_o$ denote the perpendicular distance and the absolute antenna elevation difference between the platoon vehicle and the RSU, respectively, based on \eqref{RSU_range}, the maximum longitudinal coverage range of the RSU is
\begin{align}\label{longitudinal_distance}
\ell_\mathrm{th} = \sqrt {d_{\rm th}^2  - r_o^2  - h_o^2 }.
\end{align}

For a specific targeted velocity of the stable platoon $v_o$, the platoon size/length is calculated as
\begin{align}\label{platoon_size}
\mathcal{D}_\mathrm{platoon} = M h v_o + M l.
\end{align}
The traveling time for a stable platoon in an RSU coverage area before undergoing handover is
\begin{align}\label{time}
T_\mathrm{stay}= \frac{2\ell_\mathrm{th}^0 -\mathcal{D}_\mathrm{platoon}}{v_o},
\end{align}
where $\ell_\mathrm{th}^0$ is calculated by using \eqref{longitudinal_distance} with $P_{v_i} =P_{v_0}$, due to the fact that the leader vehicle is the first to leave an RSU's coverage area.  Let $f_\mathrm{handover}$ denote the maximum allowable handover frequency between RSUs, in other words, the minimum traveling duration for a platoon in an RSU coverage area is $1/f_\mathrm{handover}$. It is obvious that $T_\mathrm{stay}$ should be greater than $1/f_\mathrm{handover}$. Thus,  by considering  \eqref{platoon_size} and \eqref{time}, we have the following condition:
\begin{align}\label{platoonvelocity}
v_o \leq \frac{2\ell_\mathrm{th}^0 - M l}{M h + 1/f_\mathrm{handover}}.
\end{align}

\noindent \textbf{Remark 4:} It is indicated from \eqref{platoonvelocity} that given the radio resources and handover frequency, platoon's velocity decreases when time headway increases, i.e., there is a tradeoff between platoon's velocity and time headway. Given a platoon configuration, the minimum required number of massive MIMO antennas or bandwidth under the QoS and handover constraints can
 \begin{table}[h]\label{table1}
\centering
\caption{Results based on \eqref{platoonvelocity}}
\begin{tabular}{|c|ccc|}
\hline
$f_c$            & $R_\mathrm{th}$(Mbps)  & $f_\mathrm{handover}$(times/s)    & maximum $v_o$(m/s)  \\ \hline
\multirow{3}{*}{3.5GHz} & 75 & 1/30 & 24 \\
                     & 75 & 1/20 & 35 \\
                     & 75 & 1/10 & 65 \\ \hline
\multirow{3}{*}{5.9GHz} & 75 & 1/30 & 14 \\
                     & 75 & 1/20 & 20 \\
                     & 75 & 1/10 & 38 \\ \hline
\multicolumn{4}{l}{N.B.: In the table, $\tau=0.3$s, $h=0.2$s, $N=64$, $M=9$, $ML=15$m,}\\
\multicolumn{4}{l}{$r_o=10$m, $h_o=6$m, $\alpha=2$, $\beta = {(\frac{{\text{c}}}{{4\pi {f_c}}})^2}$, $B=5$MHz, }\\
\multicolumn{4}{l}{$P_{v_0}=20$dBm, $\sigma^2 =- 174 + 10\log _{10}\left(B\right)$dBm.}
\end{tabular}
\end{table}
be easily evaluated based on \eqref{platoonvelocity}. Therefore, \eqref{platoonvelocity} is useful for the fast radio resource allocation and handover management in the platoon systems.
 As shown in the Table~I, higher platoon's velocity results in more handovers for the same frequency band, and higher frequency band reduces the level of the maximum allowable platoon's velocity for a fixed number of antennas and bandwidth($N=64$ and $B=5$MHz in the Table I). To keep the desired levels of platoon's velocity and QoS, more numbers of antennas and bandwidths are demanded in the higher frequencies.

\begin{figure}[t!]
\centering
\includegraphics[width=3.2 in]{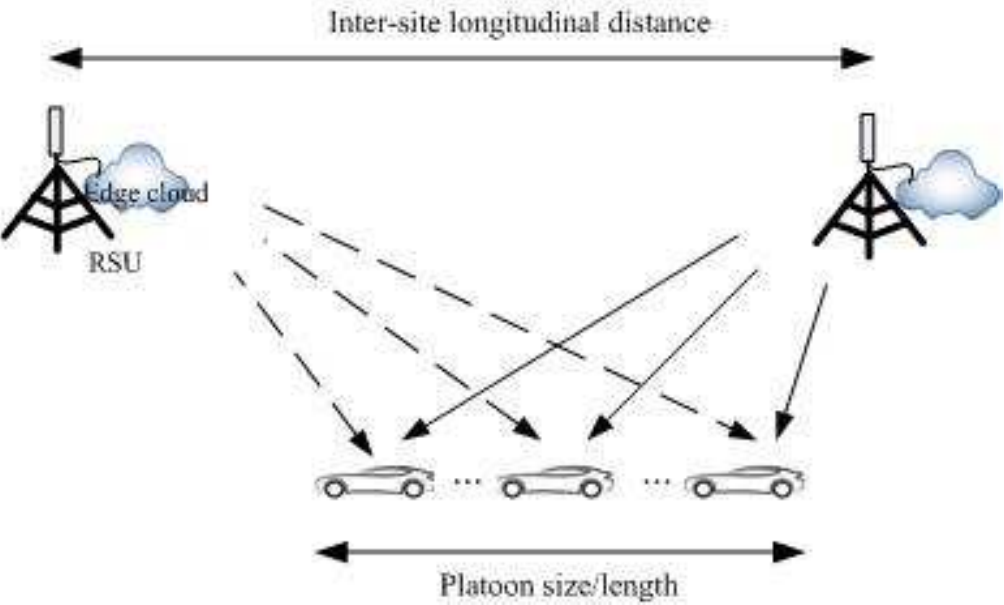}
\caption{A platoon can be seamlessly served by RSUs as the platoon is in the dual-connectivity range during the handover.
}
\label{platoon_handoverfig}
\end{figure}
The aforementioned has shown how to manage the platoon's velocity and radio resources in order to avoid frequent handover and meet the QoS requirement. In practice, it is important that the V2I-based platoon can be seamlessly controlled by RSUs when handover occurs. We realize that dual connectivity has been adopted in 4G and 5G systems~\cite{TS37_340,lifeng_mag2018}, to enhance the mobility robustness in cellular networks. Since dual connectivity allows a user to communicate with multiple network nodes at the same time, the QoS constraint can be guaranteed during the handover.
 As shown in Fig. \ref{platoon_handoverfig}, the inter-site longitudinal distance (ISLD) should be kept at a certain level to ensure that the platoon is in the dual-connectivity range during the handover. Based on \eqref{RSU_range} and \eqref{longitudinal_distance},  we can easily calculate the maximum allowable ISLD for dual connectivity as
\begin{align}\label{longitudinal_distance_DC_case}
&\ell_\mathrm{ISLD}^{\rm max} = 2 \ell_\mathrm{th}^0-\mathcal{D}_\mathrm{platoon} \nonumber\\
&=2 {\left(\left( {\frac{{P_{v_0 } \left( {N - M - 1} \right)\beta }}{{\sigma ^2 \left( {2^{\frac{{R_{\rm th} }}{B}}  - 1} \right)}}} \right)^{2/\alpha } - r_o^2  - h_o^2 \right)}^{1/2}-\mathcal{D}_\mathrm{platoon}.
\end{align}
From \eqref{longitudinal_distance_DC_case}, we see that by using dual connectivity, the V2I-based platooning systems can be seamlessly served by RSUs when the ISLD is below $\ell_\mathrm{ISLD}^{\rm max} $. It should be noted that such V2I-based platooning handover approach is flexible, for instance, by managing the radio resources such as transmit power and the number of massive MIMO antennas in \eqref{longitudinal_distance_DC_case},   ISLD can be easily tailored to meet various circumstances.

\section{Numerical Results}\label{sec:simulation}
In this section, numerical results are provided to demonstrate the efficiency of the proposed V2I-based platooning design and validate our analysis. In addition, the effects of different control gains, external disturbances, platoon sizes and delays on the performance are illustrated.
\subsection{Efficiency of the Proposed Platooning Design}
\begin{table}
\centering
\caption{Simulation parameters in Figs. \ref{efficiency} and \ref{instability}}
\begin{tabular}{cccccc}
\hline
Fig. & $\tau$ & $K_v$ & $K_{v_o }$ & $K_x$ & $K_{x_o}$ \\ \hline\hline
\ref{efficiency}(a)  & 0.1s    &  0.75 &  0.75 & 0.273  &  0.281 \\
\ref{efficiency}(b)  & 0.2s    &  0.75 & 0.75  & 0.213  & 0.297  \\
\ref{efficiency}(c)  & 0.3s    & 0.75  &  0.75 &  0.249 & 0.228  \\
\ref{instability}     & 0.3s   &  0.1 & 0.2  & 0.5  & 0.1\\ \hline
\end{tabular}
\end{table}
\begin{figure}
     \centering
    \subfigure[]{
         \centering
         \includegraphics[width=3.7 in,height=2.0 in]{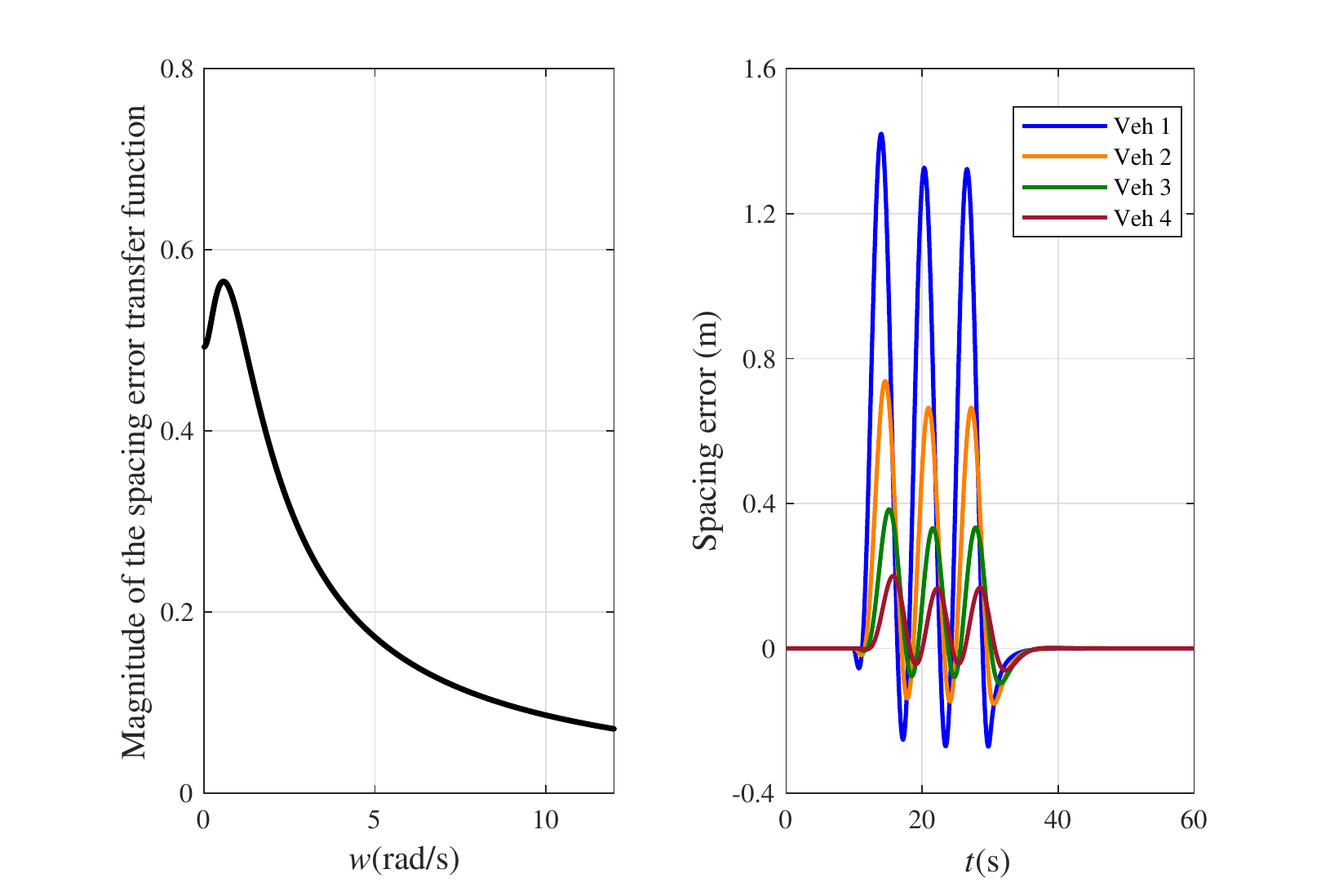}
      \label{fig1a}
     }
     \subfigure[]{
         \centering
         \includegraphics[width=3.7 in,height=2.0 in]{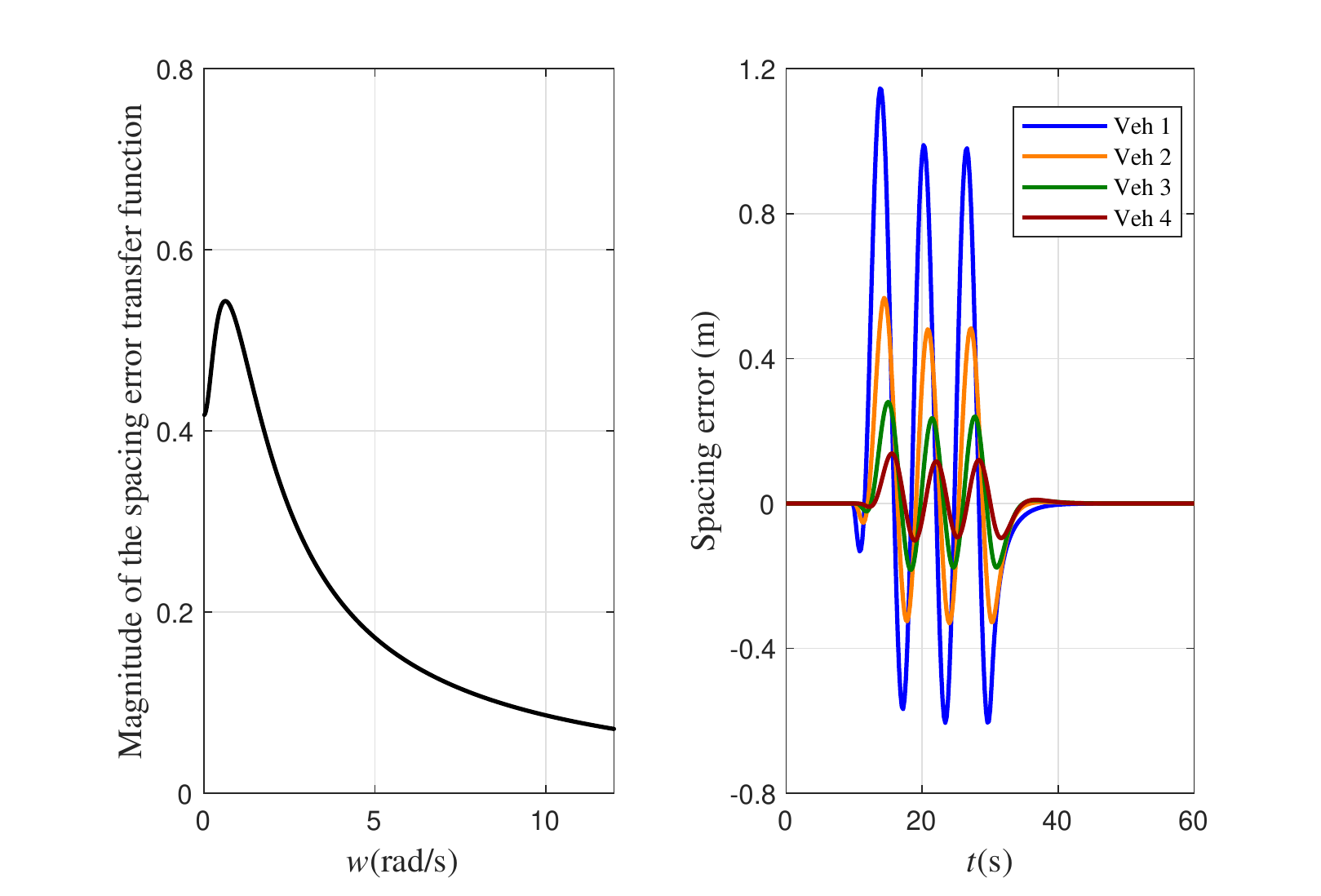}
      \label{fig1b}
    }
    \subfigure[]{
         \centering
         \includegraphics[width=3.7 in,height=2.0 in]{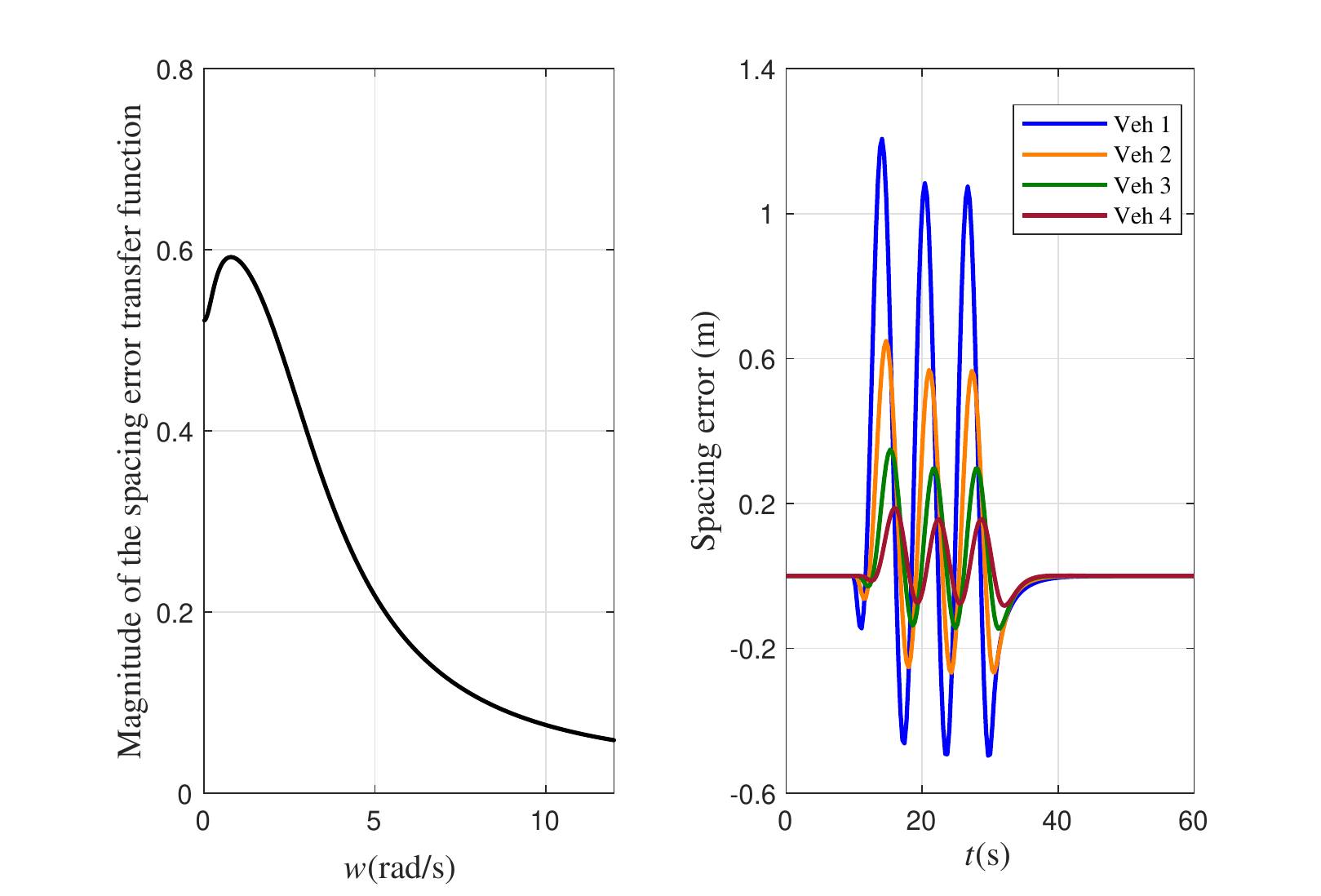}
         \label{fig1c}
     }
 \caption{The platooning performance of the proposed design.}
 \label{efficiency}
\end{figure}
This subsection shows the efficiency of the proposed design. In the simulations,  the time headway $h=0.2$s, the number of follower vehicles is $M=4$, and the spacing error is zero before leader vehicle changes its velocity.
The leader vehicle suffers an external disturbance during the time period $10\leq t\leq30$s, which is modeled by assuming that its acceleration varies as $\dot{v}_0(t)=-sin\left(t\right)$. The other system parameters are summarized in the Table II.

Fig. \ref{efficiency} shows the proposed platooning design can efficiently achieve plant stability and string stability for different levels of delay.  As mentioned in Theorem 2, the magnitude of the spacing error transfer function is kept below 1 for an arbitrary frequency $w$ and the spacing error decreases in the traffic flow upstream(namely the vehicle index increases) since the control gains are chosen from the feasible region $\mathcal{S}\left(\tau\right)$ given by \eqref{string_stability}. The spacing errors of the follower vehicles can be quickly diminished to zero when the leader vehicle's external disturbance is gone at $t>30$s, since the control gains belongs to the feasible region $\mathcal{G}\left(\tau\right)$ given by \eqref{w_eta_gamma_plane} and thus plant stability is guaranteed.

Fig. \ref{instability} shows the case when the control gains are chosen from the outside of $\mathcal{S}\left(\tau\right)$($\lambda >  K_v K_{v_o }$ in the Table II). As analyzed before, the magnitude of the spacing error transfer function is larger than 1 for certain $w$ values, in this case, the spacing errors of the follower vehicles are amplified in the traffic flow upstream, i.e., string instability occurs.
\begin{figure}[htbp]
\centering
\includegraphics[width=3.8 in,height=2.9 in]{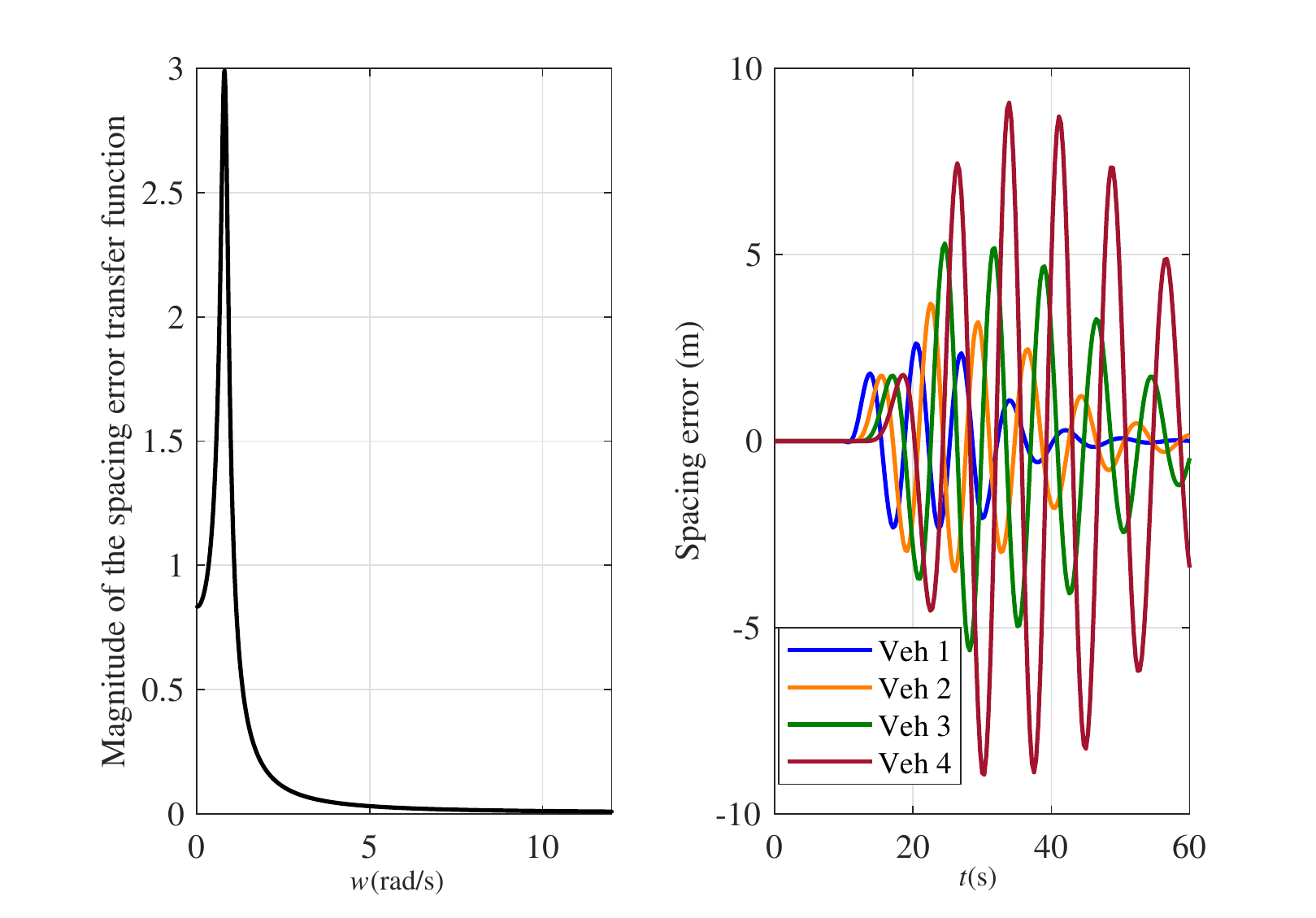}
\caption{String instability result when control gains do not belong to the feasible region given by \eqref{string_stability}.}
\label{instability}
\end{figure}

\subsection{Effects of Control Gains}

\begin{figure}
     \centering
    \subfigure[]{
         \centering
         \includegraphics[width=3.4 in,height=2.5 in]{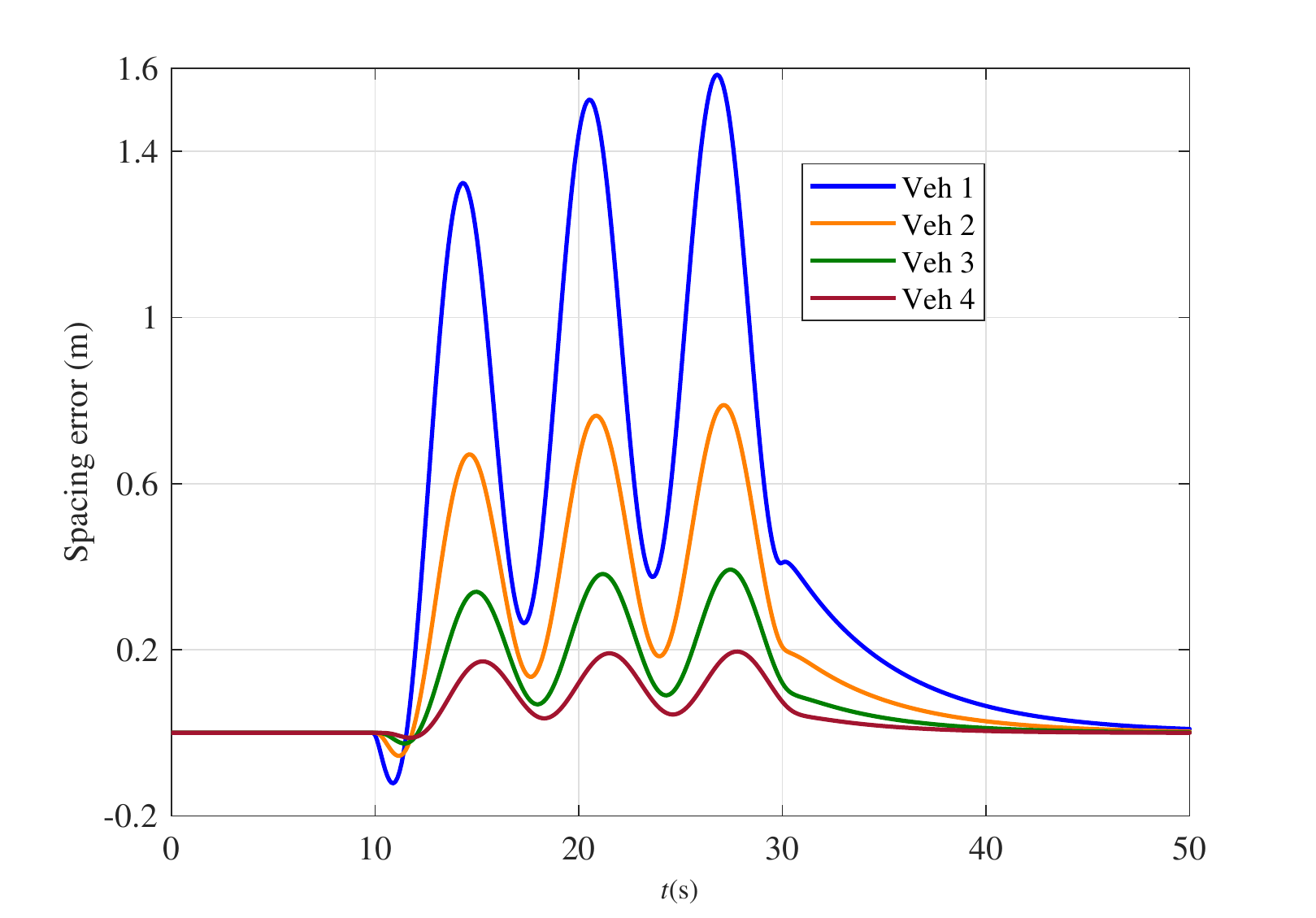}
      \label{fig1a}
     }
     \subfigure[]{
         \centering
         \includegraphics[width=3.4 in,height=2.5 in]{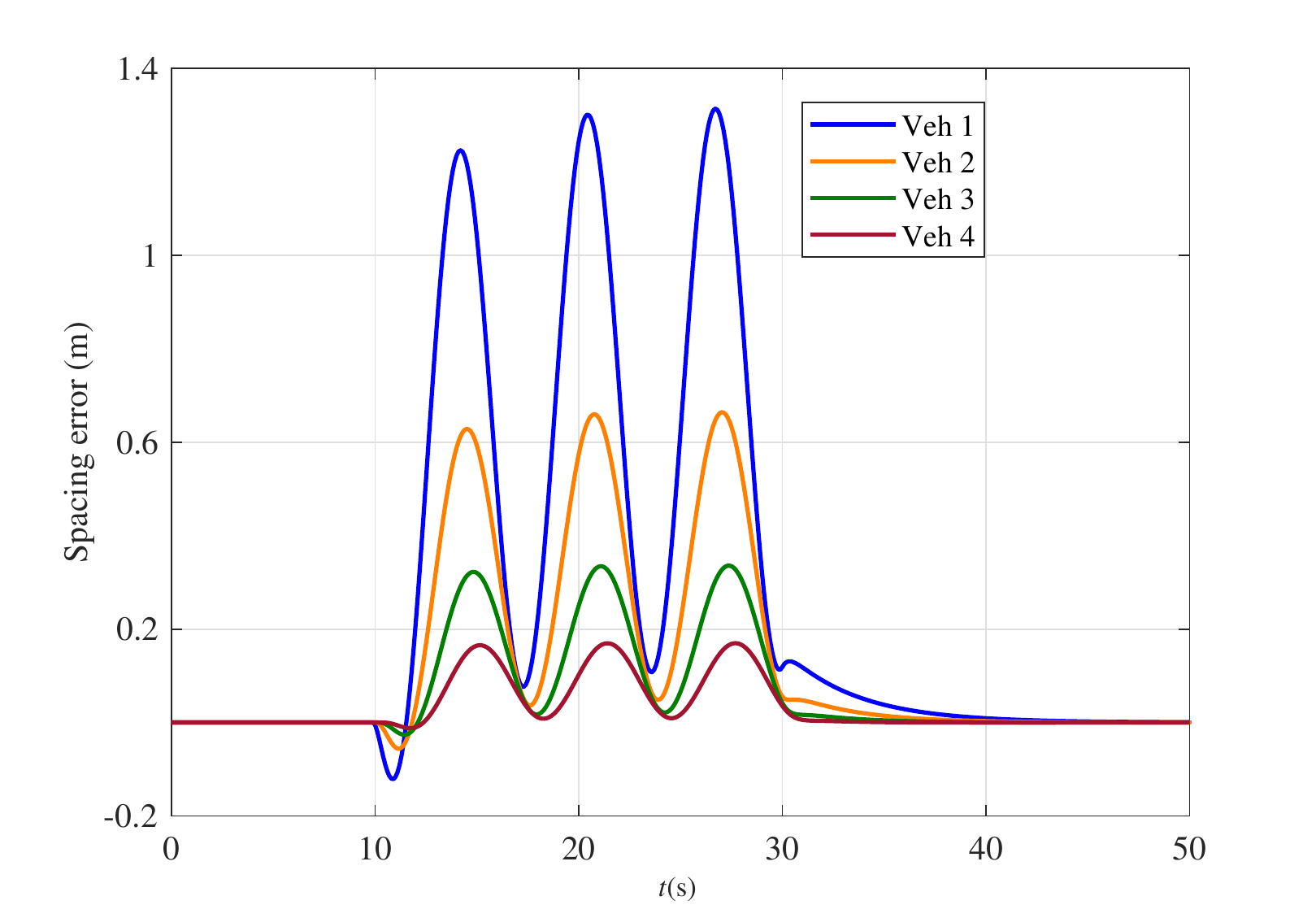}
      \label{fig1b}
    }
 \caption{Effects of different control gains.}
 \label{control_gains_fig}
\end{figure}

This subsection shows the effects of choosing different control gains. The leader vehicle's acceleration varies as $\dot{v}_0(t)=-sin\left(t\right)$ at $10\leq t\leq30$ (s), $M=4$, $\tau=0.1$s and $h=0.2$s. The control gain vectors in Fig. \ref{control_gains_fig}(a) and Fig. \ref{control_gains_fig}(b) are given by $\left[K_v, K_{v_o }, K_x, K_{x_o}\right]=\left[1.5,1.5,0.273,0.281\right]$ and $\left[K_v, K_{v_o }, K_x, K_{x_o}\right]=\left[1.5,1.5,0.4,0.4\right]$, respectively, which are chosen from the the feasible regions in Section III.

It is seen in Fig. \ref{control_gains_fig}(a) and Fig. \ref{control_gains_fig}(b) that both control gain vectors can achieve plant stability and string stability, since they belong to the feasible regions mentioned in Section III. Although the platoon experiences the same external disturbance, the slightly different values of the control gains may cause significantly different performance behaviors, i.e., the control gains used in Fig. \ref{control_gains_fig}(a) make the follower vehicles' space errors vary more drastically, and the platoon needs to spend more time on reaching the stability, compared to the case of control gains used in Fig. \ref{control_gains_fig}(b).

\subsection{Effects of External Disturbance}
 \begin{figure}
     \centering
    \subfigure[]{
         \centering
         \includegraphics[width=3.4 in,height=2.5 in]{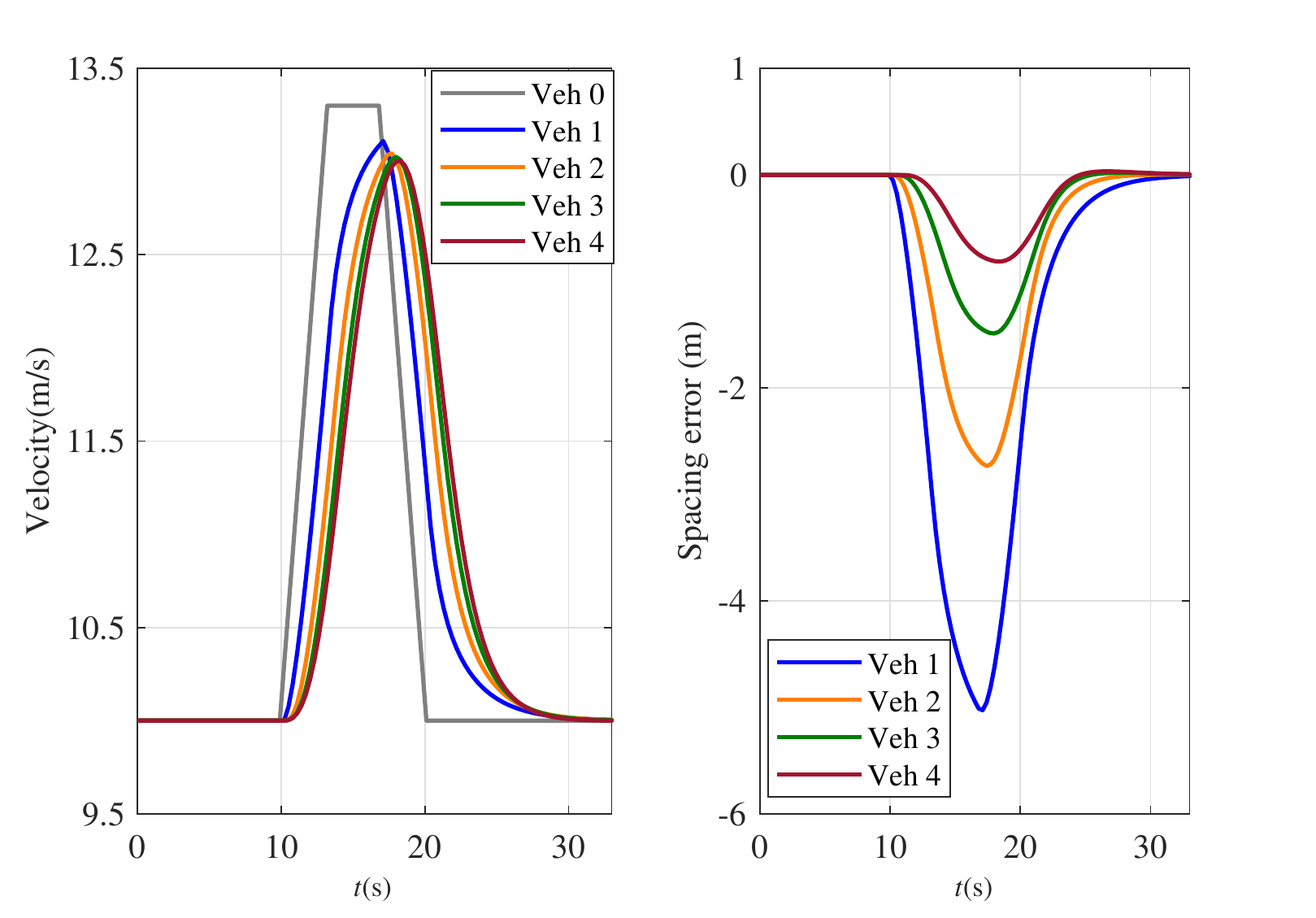}
      \label{fig1a}
     }
     \subfigure[]{
         \centering
         \includegraphics[width=3.4 in,height=2.5 in]{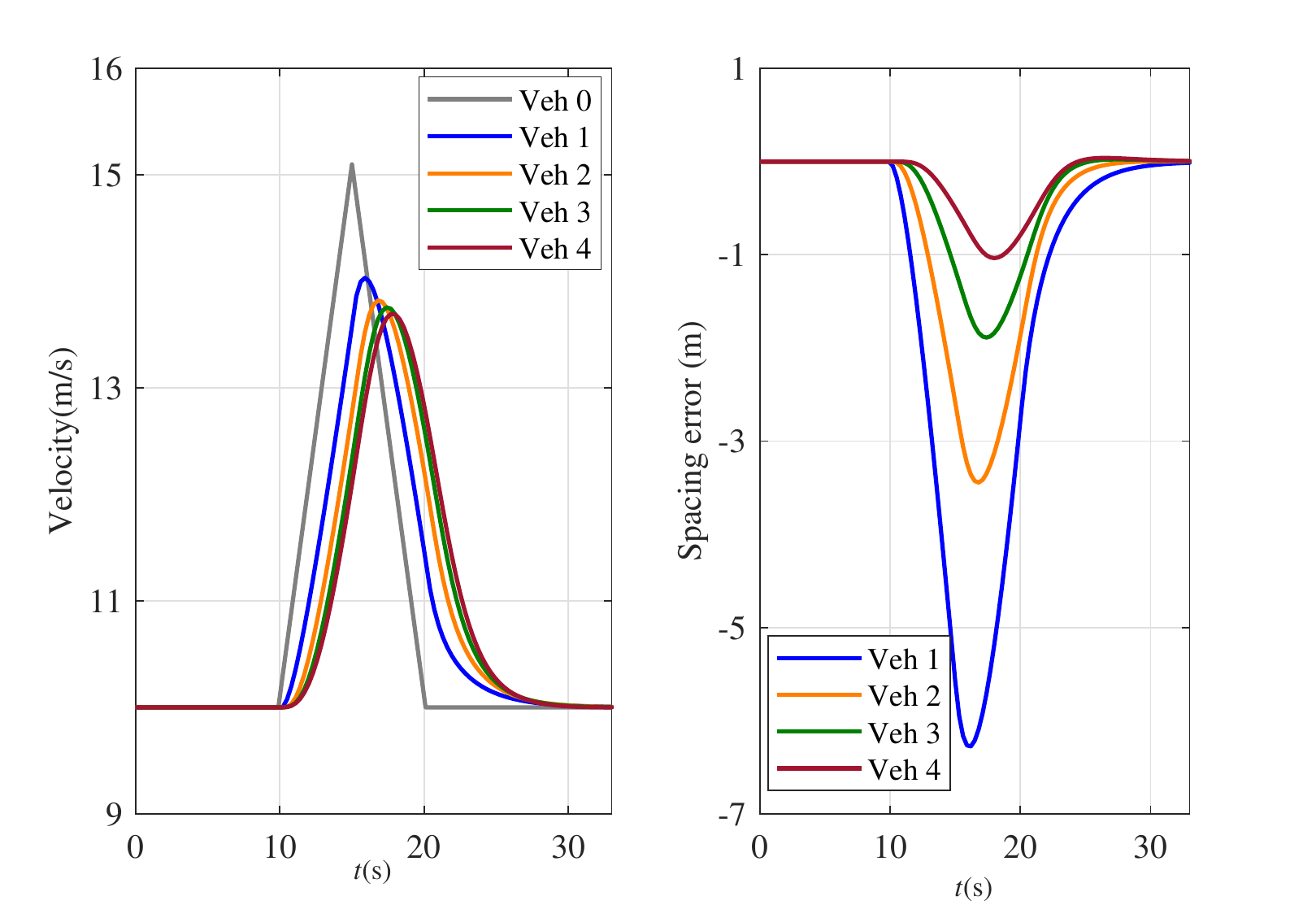}
      \label{fig1b}
    }
 \caption{Effects of different external disturbances.}
 \label{disturbance_fig}
\end{figure}
This subsection shows the effects of different external disturbances imposed on the leader vehicle. Specifically, the external disturbances of the platoon for the simulations in Fig. \ref{disturbance_fig}(a) and Fig. \ref{disturbance_fig}(b) are given by
 \begin{align}\label{smu_disturbzero}
 \dot{v}_0(t)=
\left\{ \begin{array}{l}
1,\quad \,\,10 \le t \le 13 \mathrm{s},\quad  \\
0,\quad \,\,13 < t \le 17 \mathrm{s},\quad  \\
- 1,\quad 17 < t \le 20 \mathrm{s},\quad  \\
 \end{array} \right.
\end{align}
and
\begin{align}\label{smu_disturb}
 \dot{v}_0(t)=
\left\{ \begin{array}{l}
1,\quad \,\,10 \le t \le 15 \mathrm{s},\quad  \\
- 1,\quad 15 < t \le 20 \mathrm{s},\quad  \\
 \end{array} \right.
\end{align}
respectively. The other basic simulation parameters are identical in the results of Fig. \ref{disturbance_fig}(a) and Fig. \ref{disturbance_fig}(b), namely the control gain vector $\left[K_v, K_{v_o }, K_x, K_{x_o}\right]=\left[0.75,0.75,0.249,0.228\right]$, $M=4$, $\tau=0.3$s and $h=0.2$s.

It is seen from Fig. \ref{disturbance_fig}(a) and Fig. \ref{disturbance_fig}(b) that although the platoon stability for these two types of external disturbances are achieved at almost the same time, the external disturbance given by \eqref{smu_disturb} forces the follower vehicles to change their moving speeds more rapidly and results in larger spacing errors, compared to the type of external disturbance given by \eqref{smu_disturbzero}. Such dramatic changes of the vehicles' driving states during the external disturbance may need to be properly addressed in practice, due to the fact that different vehicles may have velocity limitations under hardware constraints.

\begin{figure}
     \centering
    \subfigure[]{
         \centering
         \includegraphics[width=3.4 in,height=2.5 in]{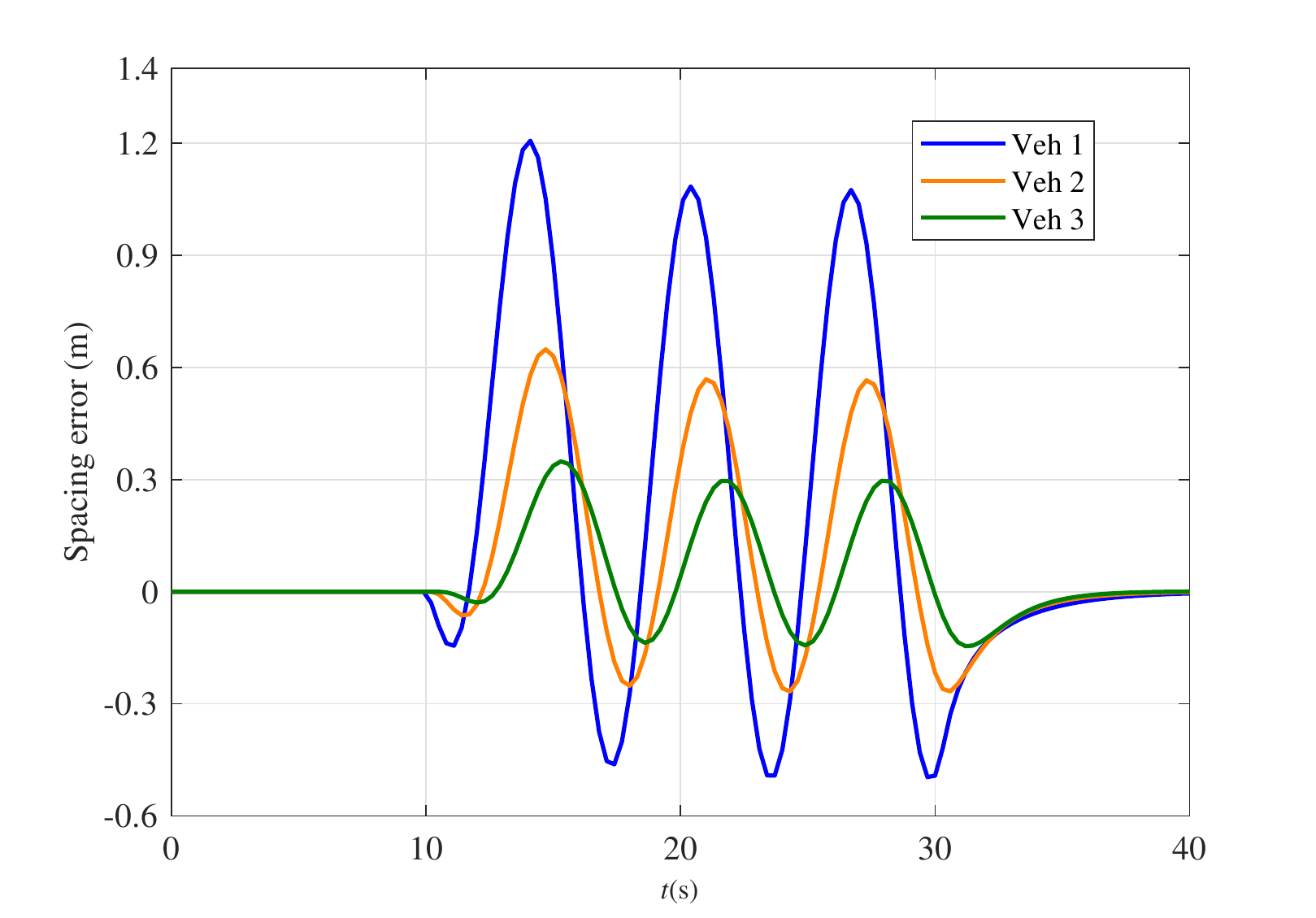}
      \label{fig1a}
     }
     \subfigure[]{
         \centering
         \includegraphics[width=3.4 in,height=2.5 in]{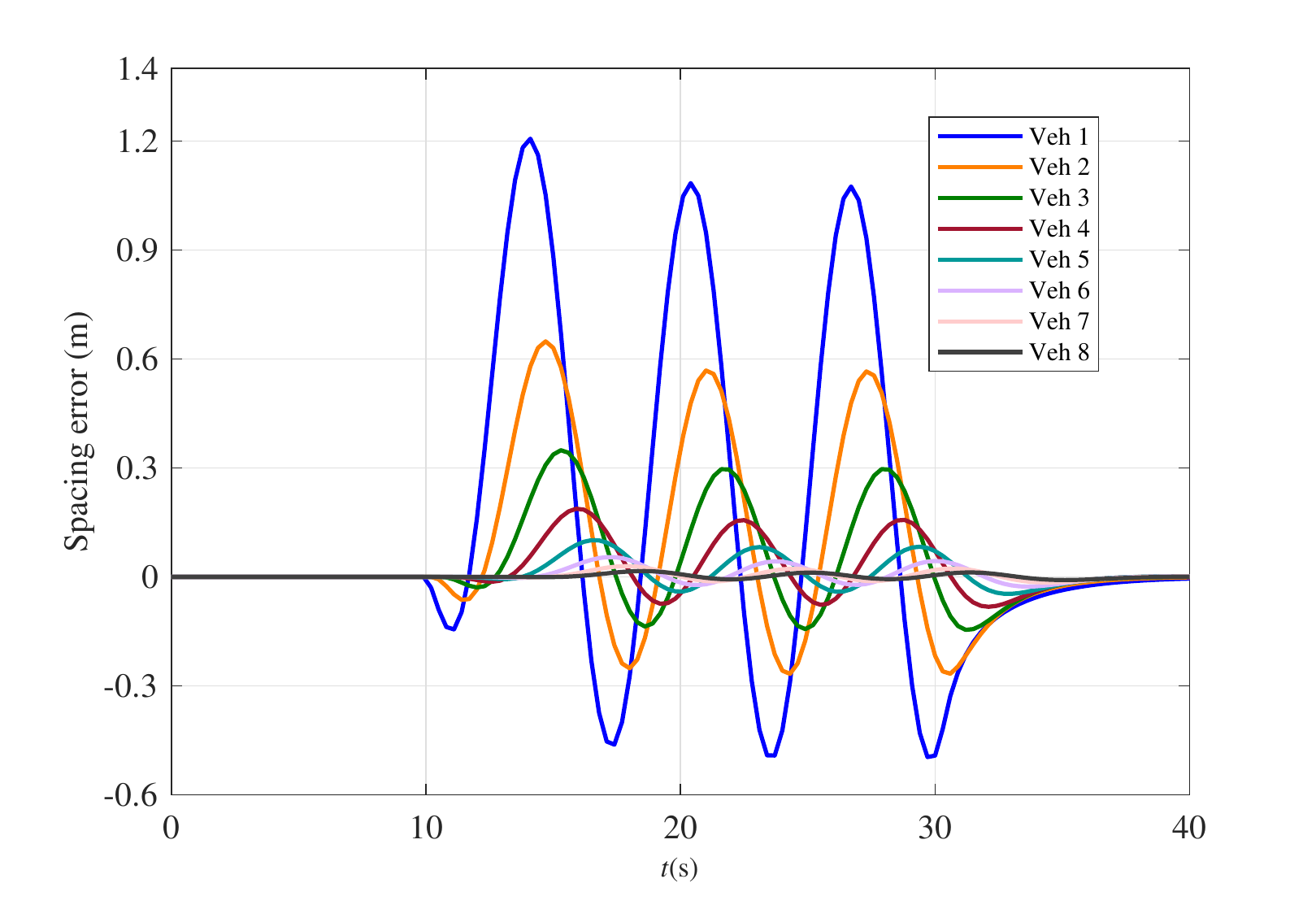}
      \label{fig1b}
    }
 \caption{Effects of different platoon sizes.}
 \label{platoonsize_fig}
\end{figure}

\subsection{Effects of Platoon Size}
This subsection shows the effects of platoon size. In the simulations, there are two platoons consisting of three and eight follower vehicles, respectively, the control gain vector $\left[K_v, K_{v_o }, K_x, K_{x_o}\right]=\left[0.75,0.75,0.249,0.228\right]$, the leader vehicle's acceleration varies as $\dot{v}_0(t)=-sin\left(t\right)$ at $10\leq t\leq30$ (s), $\tau=0.3$s and $h=0.2$s.

It is seen from Fig. \ref{platoonsize_fig}(a) and Fig. \ref{platoonsize_fig}(b) that when the control gains and other system parameters are fixed, changing the platoon size has negligible effect on the spacing errors of the follower vehicles, which confirms the scalability of the proposed platooning design. In addition, platoons with different sizes have nearly same disturbance time period before reaching the system stability.

\subsection{Effects of Delay}
This subsection shows the effects of delay. In the simulations, we consider two delay cases, i.e., $\tau=0.1$s in Fig. \ref{delay_fig}(a) and $\tau=0.3$s in Fig. \ref{delay_fig}(b),  the platoon consists of six follower vehicles besides the leader vehicle,  the control gain vector $\left[K_v, K_{v_o }, K_x, K_{x_o}\right]=\left[0.75,0.75,0.249,0.228\right]$, the leader vehicle's acceleration varies as $\dot{v}_0(t)=-sin\left(t\right)$ at $10\leq t\leq30$ (s),  and $h=0.2$s.
\begin{figure}
     \centering
    \subfigure[]{
         \centering
         \includegraphics[width=3.4 in,height=2.5 in]{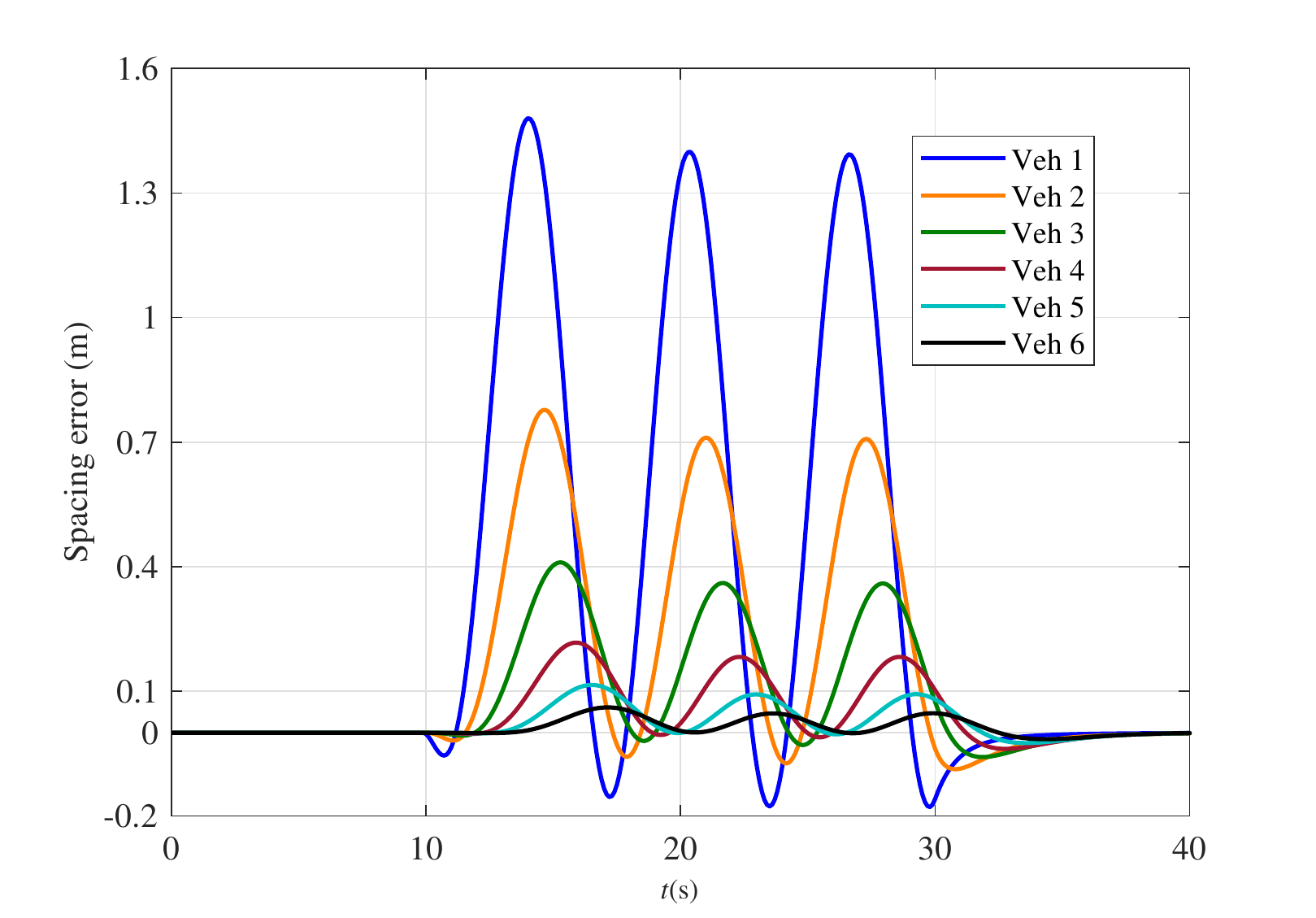}
      \label{fig1a}
     }
     \subfigure[]{
         \centering
         \includegraphics[width=3.4 in,height=2.5 in]{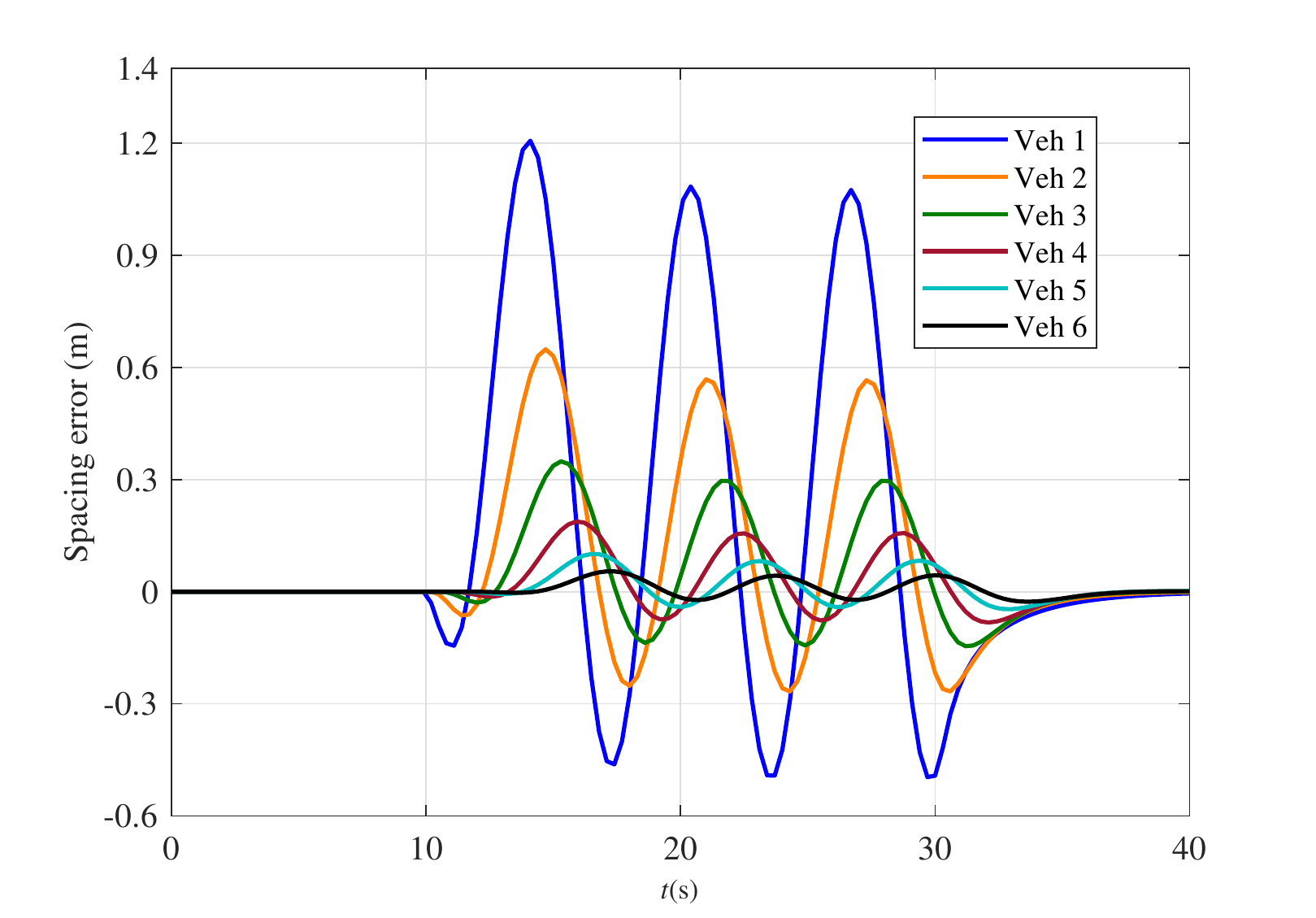}
      \label{fig1b}
    }
 \caption{Effects of different levels of delay.}
 \label{delay_fig}
\end{figure}

It is seen from Fig. \ref{delay_fig}(a) and Fig. \ref{delay_fig}(b) that when the control gains and other system parameters are fixed, different levels of delay have a big impact on the spacing errors during the disturbance time period. However, the time of reaching the system stability is nearly unaltered for different delay cases. Through the comparison with the results in Fig. \ref{platoonsize_fig}, it is again confirmed that platoons with different sizes has negligible effect on the stability and efficiency of the proposed design when the rest of system parameters and external disturbance are identical.

\section{Conclusions}\label{conclusion_section}
This paper concentrated on the V2I-based platooning systems, where RSUs have the capabilities of massive MIMO and edge computing. By considering the effect of delay,  an efficient platooning control approach was developed. We demonstrated that the proposed platooning design can achieve both plant stability and string stability by selecting control gains in the derived feasible regions. Moreover, we provided a tractable method to explicitly quantify the relationships between the platoon's velocity, platoon size/length, radio resources and handover. By using our derivations, the platoon's velocity, radio sources and handover can be easily determined. Simulation results confirmed the efficiency of the proposed platooning design, and the effects of different control gains, external disturbances, platoon sizes and delays on the performance were comprehensively illustrated.

\section*{Appendix A: Proof of Theorem 1}
\label{App:theo_1}
\renewcommand{\theequation}{A.\arabic{equation}}
\setcounter{equation}{0}
The necessary and sufficient condition for plant stability is given via the D-subdivision approach~\cite{D_subdivision_method}. Let $s_0=\xi + j w$, the characteristic equation $\Theta\left(s_0\right) = 0$ can be decomposed into real and imaginary parts, which are
\begin{align}
&\mathrm{Re}: \eta \xi \cos \left( {\tau w} \right) + \eta w\sin \left( {\tau w} \right) + \lambda \cos \left( {\tau w} \right) = e^{\tau \xi } \left( w^2-\xi ^2\right),
\label{Re_Im_1}\\
&\mathrm{Im}: \eta w\cos \left( {\tau w} \right) - \eta \xi \sin \left( {\tau w} \right) - \lambda \sin \left( {\tau w} \right) + 2e^{\tau \xi } \xi w = 0. \label{Re_Im_2}
\end{align}
By letting $\xi=0$, the D-curves can be expressed as
\begin{align}\label{Re_Im}
&\mathrm{Re}: \eta w\sin \left( {\tau w} \right) + \lambda \cos \left( {\tau w} \right) = w^2,  \\
&\mathrm{Im}: \eta w\cos \left( {\tau w} \right) = \lambda \sin \left( {\tau w} \right).
\end{align}
The above equation can be equivalently written as
\begin{align}\label{eta_lamda}
\lambda = w^2 \cos \left( {\tau w} \right), \;\; \eta  = w \sin \left( {\tau w} \right).
\end{align}
Note that $w>0$ and $\tau w \in \left(2k\pi, \frac{\pi }{{2 }} + 2k\pi \right), k=0,1,2,\cdots,$ since $\lambda >0$ and $\eta > 0$.

To determine the crossing direction from stability to instability along the D-curves, we first take the first-order derivative of \eqref{Re_Im_1} and \eqref{Re_Im_2} with respect to $\eta$
at $\xi=0$ (along the D-curves), after mathematical manipulations, the first-order derivative of $\xi$ at $\xi=0$ is
\begin{align}\label{eta_xi_derivative}
\frac{{d\xi }}{{d\eta }} = \frac{\frac{w^2 \left( {\tau \eta  - 2\cos \left( {\tau w} \right)} \right)}{\left( {2w + \tau \lambda \sin \left( {\tau w} \right) - \tau \eta w\cos \left( {\tau w} \right) - \eta \sin \left( {\tau w} \right)} \right)^2}}{{\frac{\left( {\tau \eta w\sin \left( {\tau w} \right) + \tau \lambda \cos \left( {\tau w} \right) - \eta \cos \left( {\tau w} \right)} \right)^2}{\left( {2w + \tau \lambda \sin \left( {\tau w} \right) - \tau \eta w\cos \left( {\tau w} \right) - \eta \sin \left( {\tau w} \right)} \right)^2}  + 1 }}.
\end{align}
From \eqref{eta_xi_derivative}, we see that when $\eta > \frac{2\cos \left( {\tau w} \right)}{\tau}$, $\frac{{d\xi }}{{d\eta }} > 0$, i.e., there exists the positive real part of the characteristic root, and the plant stability is violated as $\eta$ increases. Based on \eqref{eta_lamda}, $\eta = \frac{\pi }{{2\tau }} + \frac{{2k\pi }}{\tau }\left(k=0,1,2,\cdots\right)$ as $\lambda =0$.
Considering the fact that $\frac{{d\xi }}{{d\eta }} > 0$ with $\eta = \frac{\pi }{{2\tau }}$, $\left(\lambda, \eta\right) =\left(0, \frac{\pi }{{2\tau }}\right)$ is a corner point of the stability region, which means that $\tau w \in \left(0, \frac{\pi }{{2 }}\right)$.

Likewise, taking the first-order derivative of \eqref{Re_Im_1} and \eqref{Re_Im_2} with respect to $\lambda$
at $\xi=0$ (along the D-curves), after mathematical manipulations, we have
\begin{align}\label{lambda_xi_derivative}
\frac{{d\xi }}{{d\lambda }} = \frac{{\frac{{\tau \lambda {\rm{ + }}\eta }}{{\left( {\tau \lambda \cos \left( {\tau w} \right) + \tau \eta w\sin \left( {\tau w} \right) - \eta \cos \left( {\tau w} \right)} \right)^2 }}}}{{\frac{{\left( {2w + \tau \lambda \sin \left( {\tau w} \right) - \tau \eta w\cos \left( {\tau w} \right) - \eta \sin \left( {\tau w} \right)} \right)^2 }}{{\left( {\tau \lambda \cos \left( {\tau w} \right) + \tau \eta w\sin \left( {\tau w} \right) - \eta \cos \left( {\tau w} \right)} \right)^2 }} + 1}}.
\end{align}
From \eqref{lambda_xi_derivative}, we see that $\frac{{d\xi }}{{d\lambda }} > 0$ for arbitrary $\lambda$ value, which means that the plant stability is violated as $\lambda$ increases. Thus, we can finally obtain the feasible region $\mathcal{G}\left(\tau\right)$ given by \eqref{w_eta_gamma_plane}.

\section*{Appendix B: Proof of Theorem 2}
\label{App:theo_2}
\renewcommand{\theequation}{B.\arabic{equation}}
\setcounter{equation}{0}
String stability is achieved when $\left|\mathcal{H}_i\left(jw\right)\right| < 1$. Based on \eqref{error_transfer_eq}, $\left|\mathcal{H}_i\left(jw\right)\right|$ is given by
\begin{align}\label{B_11}
\left|\mathcal{H}_i\left(jw\right)\right| =
\sqrt {\frac{{K_v^2 w^2  + K_x^2 }}{\Xi \left( w \right)+ K_v^2 w^2  + K_x^2}},
\end{align}
where
\begin{align}\label{B_12}
\Xi \left( w \right) & =
w^4  - 2\eta\sin \left( {\tau w} \right)w^3 \nonumber\\
&~~ + \left( K_x^2 h^2  + 2K_x \left( {K_v  + K_{v_o } } \right)h + K_{v_o }^2  + 2K_v K_{v_o } \right)w^2 \nonumber\\
&~~~- 2\lambda \cos \left( {\tau w} \right) w^2 + K_{x_o }^2  + 2K_x K_{x_o }.
\end{align}
Note that both the numerator and denominator of \eqref{B_11} have the positive term $K_v^2 w^2  + K_x^2$, thus $\left|\mathcal{H}_i\left(jw\right)\right| < 1$ is equivalently transformed as $\Xi \left( w \right)> 0$, $\forall w \geq 0$. Considering the fact that $\sin \left( {\tau w} \right) \le \tau w$ and $\cos \left( {\tau w} \right) \leq 1$, we have
 \begin{align}\label{B_13}
- 2\eta \sin \left( {\tau w} \right)w^3  \ge  - 2\eta \tau w^4, \;\; 2\lambda \cos \left( {\tau w} \right)w^2 \leq 2\lambda w^2 .
\end{align}
Based on  \eqref{B_12} and \eqref{B_13}, the following inequality is obtained as
 \begin{align}\label{B_14}
\Xi \left( w \right) \geq &\left( {1 - 2\eta \tau } \right)w^4  + K_x^2 h^2 w^2 \nonumber\\
&~~+ \left(2K_x \left( {K_v  + K_{v_o } } \right)h + K_{v_o }^2 \right) w^2  \nonumber\\
&~~~~~+ 2\left( {K_v K_{v_o }  - \lambda } \right)w^2  + K_{x_o }^2  + 2K_x ,
\end{align}
When $\eta \le \frac{1}{{2\tau }}$ and $\lambda \leq K_v K_{v_o }$, the right-hand-side of the inequality is positive, thus $\Xi \left( w \right) > 0$, and complete the proof.

\bibliographystyle{IEEEtran}

\begin{thebibliography}{10}
\providecommand{\url}[1]{#1}
\csname url@samestyle\endcsname
\providecommand{\newblock}{\relax}
\providecommand{\bibinfo}[2]{#2}
\providecommand{\BIBentrySTDinterwordspacing}{\spaceskip=0pt\relax}
\providecommand{\BIBentryALTinterwordstretchfactor}{4}
\providecommand{\BIBentryALTinterwordspacing}{\spaceskip=\fontdimen2\font plus
\BIBentryALTinterwordstretchfactor\fontdimen3\font minus
  \fontdimen4\font\relax}
\providecommand{\BIBforeignlanguage}[2]{{%
\expandafter\ifx\csname l@#1\endcsname\relax
\typeout{** WARNING: IEEEtran.bst: No hyphenation pattern has been}%
\typeout{** loaded for the language `#1'. Using the pattern for}%
\typeout{** the default language instead.}%
\else
\language=\csname l@#1\endcsname
\fi
#2}}
\providecommand{\BIBdecl}{\relax}
\BIBdecl

\bibitem{C_Y_Liang_1999}
C.-Y. Liang and H. Peng, ``Optimal adaptive cruise control with guaranteed
  string stability,'' \emph{Veh. Syst. Dynamics}, vol. 32, no. 4-5, pp.
  313-330, 1999.

\bibitem{Ramzi2003}
R.~Abou-Jaoude, ``{ACC} radar sensor technology, test requirements, and test
  solutions,'' \emph{{IEEE} Trans. Intell. Transp. Syst.}, vol.~4, no.~3, pp.
  115--122, 2003.

\bibitem{Daniel_Work2020}
G.~Gunter, C.~Janssen, W.~Barbour, R.~E. Stern, and D.~B. Work, ``Model-based
  string stability of adaptive cruise control systems using field data,''
  \emph{{IEEE} Trans. Intell. Veh.}, vol.~5, no.~1, pp. 90--99, Mar. 2020.

\bibitem{X_Liu_2001}
{X. Liu}, A.~{Goldsmith}, S.~S. {Mahal}, and J.~K. {Hedrick}, ``Effects of
  communication delay on string stability in vehicle platoons,'' in \emph{IEEE
  Intell. Transp. Syst. Conf. (ITSC)}, 2001, pp. 625--630.

\bibitem{Soncu_2014}
S.~\"{O}nc\"{u}, J.~{Ploeg}, N.~{van de Wouw}, and H.~{Nijmeijer},
  ``Cooperative adaptive cruise control: Network-aware analysis of string
  stability,'' \emph{IEEE Trans. Intell. Transp. Syst.}, vol.~15, no.~4, pp.
  1527--1537, Aug. 2014.

\bibitem{G_Naus_2010}
G.~J.~L. {Naus}, R.~P.~A. {Vugts}, J.~{Ploeg}, M.~J.~G. {van de Molengraft},
  and M.~{Steinbuch}, ``String-stable {CACC} design and experimental
  validation: {A} frequency-domain approach,'' \emph{{IEEE} Trans. Veh.
  Technol.}, vol.~59, no.~9, pp. 4268--4279, Nov. 2010.

\bibitem{J_Ploeg_2011}
J.~{Ploeg}, B.~T.~M. {Scheepers}, E.~{van Nunen}, N.~{van de Wouw}, and
  H.~{Nijmeijer}, ``Design and experimental evaluation of cooperative adaptive
  cruise control,'' in \emph{IEEE Conf. on Intell. Transp. Syst. (ITSC)}, 2011,
  pp. 260--265.

\bibitem{Vicente2014}
V.~Milan\'{e}s, S.~E. Shladover, J.~Spring, C.~Nowakowski, H.~Kawazoe, and
  M.~Nakamura, ``Cooperative adaptive cruise control in real traffic
  situations,'' \emph{{IEEE} Trans. Intell. Transp. Syst.}, vol.~15, no.~1, pp.
  296--305, Feb. 2014.

\bibitem{3GPP_TS_V2X}
3GPP TS 22.186, ``Enhancement of 3GPP support for V2X scenarios; Stage 1
  (Release 16),'' June 2019.

\bibitem{Shaw2007}
E.~Shaw and J.~K. Hedrick, ``String stability analysis for heterogeneous
  vehicle strings,'' in \emph{Proc. American Control Conf.}, 2007, pp.
  3118--3125.

\bibitem{YuYu_Lin_2017}
Y.-Y. Lin and I.~Rubin, ``Integrated message dissemination and traffic
  regulation for autonomous {VANETs},'' \emph{{IEEE} Trans. Veh. Technol.},
  vol.~66, no.~10, pp. 8644--8658, Oct. 2017.

\bibitem{B_Liu_2017}
B.~{Liu}, D.~{Jia}, K.~{Lu}, D.~{Ngoduy}, J.~{Wang}, and L.~{Wu}, ``A joint
  control-communication design for reliable vehicle platooning in hybrid
  traffic,'' \emph{{IEEE} Trans. Veh. Technol.}, vol.~66, no.~10, pp.
  9394--9409, Oct. 2017.

\bibitem{Shengbo_ITSC_2018}
Y.~Bian, Y.~Zheng, S.~E. Li, Z.~Wang, Q.~Xu, J.~Wang, and K.~Li, ``Reducing
  time headway for platoons of connected vehicles via multiple-predecessor
  following,'' in \emph{IEEE Conf. on Intell. Transp. Syst. (ITSC)}, 2018, pp.
  1240--1245.

\bibitem{S_Darbha_2019}
S.~Darbha, S.~Konduri, and P.~R. Pagilla, ``Benefits of {V2V} communication for
  autonomous and connected vehicles,'' \emph{{IEEE} Trans. Intell. Transp.
  Syst.}, vol.~20, no.~5, pp. 1954--1963, May 2019.

\bibitem{Yuanheng_Zhu_2019}
Y.~Zhu, D.~Zhao, and Z.~Zhong, ``Adaptive optimal control of heterogeneous
  {CACC} system with uncertain dynamics,'' \emph{{IEEE} Control Syst.
  Technol.}, vol.~27, no.~4, pp. 1772--1779, July 2019.

\bibitem{Tengchan_2019}
T.~Zeng, O.~Semiari, W.~Saad, and M.~Bennis, ``Joint communication and control
  for wireless autonomous vehicular platoon systems,'' \emph{{IEEE} Trans.
  Commun.}, vol.~67, no.~11, pp. 7907--7922, Nov. 2019.

\bibitem{Michal_Sybis_2019}
M.~Sybis, V.~Vukadinovic, M.~Rodziewicz, P.~Sroka, A.~Langowski, K.~Lenarska,
  and K.~Weso{\l}owski, ``Communication aspects of a modified cooperative
  adaptive cruise control algorithm,'' \emph{{IEEE} Trans. Intell. Transp.
  Syst.}, vol.~20, no.~12, pp. 4513--4523, Dec. 2019.

\bibitem{Linjun_2016}
L.~Zhang and G.~Orosz, ``Motif-based design for connected vehicle systems in
  presence of heterogeneous connectivity structures and time delays,''
  \emph{{IEEE} Trans. Intell. Transp. Syst.}, vol.~17, no.~6, pp. 1638--1651,
  June 2016.

\bibitem{Shengbo_2019_mag}
Y.~Zheng, Y.~Bian, S.~Li, and S.~E. Li, ``Cooperative control of heterogeneous
  connected vehicles with directed acyclic interactions,'' \emph{{IEEE} Intell.
  Transp. Syst. Mag.}, pp. 1--17, 2019.

\bibitem{Mario_2015}
M.~di~Bernardo, A.~Salvi, and S.~Santini, ``Distributed consensus strategy for
  platooning of vehicles in the presence of time-varying heterogeneous
  communication delays,'' \emph{{IEEE} Trans. Intell. Transp. Syst.}, vol.~16,
  no.~1, pp. 102--112, Feb. 2015.

\bibitem{HaitaoXing_2020}
H.~Xing, J.~Ploeg, and H.~Nijmeijer, ``Compensation of communication delays in
  a cooperative {ACC} system,'' \emph{{IEEE} Trans. Veh. Technol.}, vol.~69,
  no.~2, pp. 1177--1189, Feb. 2020.

\bibitem{A_He_2017}
A.~He, L.~Wang, Y.~Chen, K.~K. Wong, and M.~Elkashlan, ``Spectral and energy
  efficiency of uplink {D2D} underlaid massive {MIMO} cellular networks,''
  \emph{{IEEE} Trans. Commun.}, vol.~65, no.~9, pp. 3780--3793, Sept. 2017.

\bibitem{Vicente_2012}
V.~Milan\'{e}s, J.~Villagr\'{a}, J.~Godoy, J.~Sim\'{o}, J.~P{\'{e}}rez, and
  E.~Onieva, ``An intelligent {V2I}-based traffic management system,''
  \emph{{IEEE} Trans. Intell. Transp. Syst.}, vol.~13, no.~1, pp. 49--58, Mar.
  2012.

\bibitem{LeiChen_2016}
L.~Chen and C.~Englund, ``Cooperative intersection management: {A} survey,''
  \emph{{IEEE} Trans. Intell. Transp. Syst.}, vol.~17, no.~2, pp. 570--586,
  Feb. 2016.

\bibitem{yuyu_lin_ITA}
Y.-Y. Lin and I.~Rubin, ``Infrastructure aided networking and traffic
  management for autonomous transportation,'' in \emph{IEEE Inf. Theory Appl.
  Workshop (ITA)}, 2017, pp. 1--7.

\bibitem{Montanaro_2018}
U.~{Montanaro}, S.~{Fallah}, M.~{Dianati}, D.~{Oxtoby}, T.~{Mizutani}, and
  A.~{Mouzakitis}, ``On a fully self-organizing vehicle platooning supported by
  cloud computing,'' in \emph{Fifth Int. Conf. Internet of Things: Syst.,
  Management and Security}, 2018, pp. 295--302.

\bibitem{Vilalta_2019}
F.~V\'{a}zquez-Gallego, R.~Vilalta, A.~Garc\'{\i}a, F.~Mira, S.~{V\'{\i}a},
  R.~{Mu\~{n}oz}, J.~Alonso-Zarate, and M.~Catalan-Cid, ``Demo: {A} mobile edge
  computing-based collision avoidance system for future vehicular networks,''
  in \emph{IEEE INFOCOM WKSHPS}, 2019, pp. 904--905.

\bibitem{Chang_2020}
B.-J. Chang and J.-M. Chiou, ``Cloud computing-based analyses to predict
  vehicle driving shockwave for active safe driving in intelligent
  transportation system,'' \emph{{IEEE} Trans. Intell. Transp. Syst.}, vol.~21,
  no.~2, pp. 852--866, Feb. 2020.

\bibitem{xiaoyanhu_2020}
X.~Hu, L.~Wang, K.~K. Wong, M.~Tao, Y.~Zhang, and Z.~Zheng, ``Edge and central
  cloud computing: {A} perfect pairing for high energy efficiency and
  low-latency,'' \emph{{IEEE} Trans. Wireless Commun.}, vol.~19, no.~2, pp.
  1070--1083, Feb. 2020.

\bibitem{LX_2011}
L.~Xiao and F.~Gao, ``Practical string stability of platoon of adaptive cruise
  control vehicles,'' \emph{{IEEE} Trans. Intell. Transp. Syst.}, vol.~12,
  no.~4, pp. 1184--1194, Dec. 2011.

\bibitem{del_Peral-Rosado}
J.~A. {del Peral-Rosado}, M.~A. {Barreto-Arboleda}, F.~{Zanier},
  G.~{Seco-Granados}, and J.~A. {L¨®pez-Salcedo}, ``Performance limits of {V2I}
  ranging localization with {LTE} networks,'' in \emph{14th Workshop on
  Positioning, Navigation and Commun. (WPNC)}, 2017, pp. 1--5.

\bibitem{Wymeersch_2017}
H.~{Wymeersch}, G.~{Seco-Granados}, G.~{Destino}, D.~{Dardari}, and
  F.~{Tufvesson}, ``{5G} mmwave positioning for vehicular networks,''
  \emph{IEEE Wireless Commun.}, vol.~24, no.~6, pp. 80--86, Dec. 2017.

\bibitem{Jianglin2020}
J.~Lan and D.~Zhao, ``Min-max model predictive vehicle platooning with
  communication delay,'' \emph{{IEEE} Trans. Veh. Technol.}, vol.~69, no.~11,
  pp. 12\,570--12\,584, Dec. 2020.

\bibitem{H_Xing_2016}
H. Xing, J. Ploeg, and H. Nijmeijer, ``Pad\'{e} approximation of delays in
  cooperative {ACC} based on string stability requirements,'' \emph{IEEE Trans.
  Intell. Veh.}, vol. 1, no. 3, pp. 277-286, Sep. 2016.

\bibitem{D_subdivision_method}
T. Insperger and G. St\'{e}p\'{a}n, \emph{Semi-Discretization for Time-Delay
  Systems}. New York, NY, USA: Springer-Verlag, 2011.

\bibitem{Marzetta_2010_Nonc}
T.~L. Marzetta, ``Noncooperative cellular wireless with unlimited numbers of
  base station antennas,'' \emph{{IEEE} Trans. Wireless Commun.}, vol.~9,
  no.~11, pp. 3590--3600, Nov. 2010.

\bibitem{Bjorson_mag_2016}
E.~Bj{\"{o}}rnson, E.~G. Larsson, and T.~L. Marzetta, ``Massive {MIMO:} ten
  myths and one critical question,'' \emph{IEEE Commun. Mag.}, vol.~54, no.~2,
  pp. 114--123, Feb. 2016.

\bibitem{ngo2013energy}
H.~Q. Ngo, E.~G. Larsson, and T.~L. Marzetta, ``Energy and spectral efficiency
  of very large multiuser {MIMO} systems,'' \emph{{IEEE} Trans. Commun.},
  vol.~61, no.~4, pp. 1436--1449, Apr. 2013.

\bibitem{TS37_340}
3GPP TS 37.340, ``Multi-connectivity; Stage 2 (Release 16),'' Mar. 2020.

\bibitem{lifeng_mag2018}
L. Wang, K. K. Wong, S. Jin, G. Zheng, and R. W. Heath Jr., ``A new look at
  physical layer security, caching, and wireless energy harvesting for
  heterogeneous ultra-dense networks,'' \emph{IEEE Commun. Mag.}, vol. 56, no.
  6, pp. 49-55, June 2018.

\end{thebibliography}

\end{document}